\newtheorem{definition}{Definition}[section]
\newtheorem{theorem}{Theorem}[section]
\newtheorem{remark}{Remark}[theorem]
\newtheorem{lemma}[theorem]{Lemma}
\newtheorem{proposition}[theorem]{Proposition}
\newtheorem{assumption}[theorem]{Assumption}
\DeclareMathOperator*{\argmin}{arg\,min}
\title{\LARGE \bf
 Control Strategies for Recommendation Systems in Social Networks
}
\author{Ben Sprenger, Giulia De Pasquale, Raffaele Soloperto, John Lygeros, and Florian D\"orfler
\thanks{The authors are with the Automatic Control Laboratory, Department of Electrical Engineering and Information Technology, ETH Zurich, Physikstrasse
3 8092 Zurich, Switzerland. (e-mail:\{bsprenger, degiulia, soloperr, jlygeros,
dorfler\}@ethz.ch). This research is supported by the Swiss National Science Foundation under NCCR automation.
}}
\begin{document}

\maketitle
\thispagestyle{empty}
\pagestyle{empty}

\begin{abstract}
A closed-loop control model to analyze the impact of recommendation systems on opinion dynamics within social networks is introduced. The core contribution is the development and formalization of model-free and model-based approaches to recommendation system design, integrating the dynamics of social interactions within networks via an extension of the Friedkin-Johnsen (FJ) model. Comparative analysis and numerical simulations demonstrate the effectiveness of the proposed control strategies in maximizing user engagement and their potential for influencing opinion formation processes.

\end{abstract}


\section{Introduction}

\emph{Motivation:}
Opinion dynamics models provide insights into how opinions form and spread within social networks \cite{AVP-RT:17}. These models explore the interplay between individuals' beliefs, their interactions with others, and the emergence of phenomena such as consensus or polarization \cite{CA:12,AF-MM-TF-ECB-GD-SH-JL:17,AVP-RT:17,CA:23}. The rise of recommendation systems has added a new dimension to the study of opinion dynamics. These systems are responsible for filtering massive amounts of available content and providing tailored suggestions to users to ensure engaging experiences on online platforms. While recommendation systems offer benefits in terms of information filtering, they can fundamentally alter the dynamics of opinion formation \cite{DG-JL-PH:19,AS-DP-FG-JK:19,NP-JB-EE-GDP-SB-AA:23,NL-FD-NP:23}. Understanding the interplay between recommendation systems and opinion dynamics is essential for ensuring that personalized content delivery aligns with desirable social outcomes such as fairness and diversity.

\emph{Related work:}
The work in~\cite{WSR-JWP-PF:22} introduces the feedback interaction between a recommendation system and a single user. The recommendation system provides binary opinions to enhance engagement maximization based on the user's clicking history.  
The work shows that, when engagement maximization and confirmation bias are the only mechanisms involved, recommendation systems induce polarization over users' opinions. Importantly, their work does not consider the influence that interacting users' opinions might have.  Social influences are incorporated into a recommendation system in~\cite{JC-JL-GZ-YD-LM:18} via the classic DeGroot opinion dynamics model. They show that, by taking into account opinion group dynamics, the quality of recommendations made to users improves. However, the effect of the recommendation system on the opinion formation process itself is not examined. Further, the DeGroot opinion dynamics model might prove too simplistic for online social platform, where complex phenomena such as polarization and echo-chamber formation need to be understood.   
 Given the interpretation of recommendation systems as inputs to a dynamical system, the work in~\cite{MG-DC-NK-DM:19:cdc} represents an important foundation for control-theoretic analyses of opinion formation for the Friedkin-Johnsen (FJ) model.  For an overview of the properties and variations of the FJ model we refer the reader to~\cite{BDOA-MY:19}.

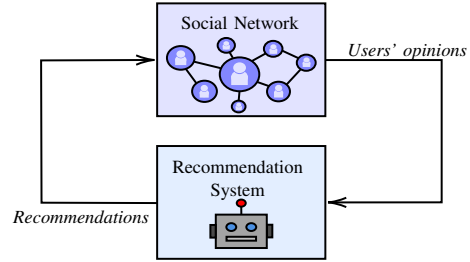
\begin{figure}[]
    \centering

\tikzset{every picture/.style={line width=0.75pt}} 
\begin{tikzpicture}[x=0.7pt,y=0.6pt,yscale=-0.9,xscale=0.9]

\draw  [fill={rgb, 255:red, 226; green, 227; blue, 255 }  ,fill opacity=1 ] (179.67,30.4) -- (280.67,30.4) -- (280.67,109.47) -- (179.67,109.47) -- cycle ;
\draw    (179.75,170.38) -- (110.25,170.13) -- (110,70.38) -- (176,70.08) ;
\draw [shift={(178,70.07)}, rotate = 179.74] [color={rgb, 255:red, 0; green, 0; blue, 0 }  ][line width=0.75]    (10.93,-3.29) .. controls (6.95,-1.4) and (3.31,-0.3) .. (0,0) .. controls (3.31,0.3) and (6.95,1.4) .. (10.93,3.29)   ;
\draw  [fill={rgb, 255:red, 131; green, 135; blue, 255 }  ,fill opacity=1 ] (217.27,80.13) .. controls (217.27,73.58) and (222.58,68.27) .. (229.13,68.27) .. controls (235.69,68.27) and (241,73.58) .. (241,80.13) .. controls (241,86.69) and (235.69,92) .. (229.13,92) .. controls (222.58,92) and (217.27,86.69) .. (217.27,80.13) -- cycle ;
\draw  [fill={rgb, 255:red, 131; green, 135; blue, 255 }  ,fill opacity=1 ] (243.77,62.26) .. controls (243.77,59.18) and (246.26,56.68) .. (249.34,56.68) .. controls (252.42,56.68) and (254.92,59.18) .. (254.92,62.26) .. controls (254.92,65.34) and (252.42,67.83) .. (249.34,67.83) .. controls (246.26,67.83) and (243.77,65.34) .. (243.77,62.26) -- cycle ;
\draw  [fill={rgb, 255:red, 131; green, 135; blue, 255 }  ,fill opacity=1 ] (245.65,92.2) .. controls (245.65,88.4) and (248.73,85.32) .. (252.54,85.32) .. controls (256.34,85.32) and (259.42,88.4) .. (259.42,92.2) .. controls (259.42,96.01) and (256.34,99.09) .. (252.54,99.09) .. controls (248.73,99.09) and (245.65,96.01) .. (245.65,92.2) -- cycle ;
\draw  [fill={rgb, 255:red, 131; green, 135; blue, 255 }  ,fill opacity=1 ] (224.5,102.71) .. controls (224.5,100.44) and (226.35,98.6) .. (228.62,98.6) .. controls (230.89,98.6) and (232.73,100.44) .. (232.73,102.71) .. controls (232.73,104.98) and (230.89,106.82) .. (228.62,106.82) .. controls (226.35,106.82) and (224.5,104.98) .. (224.5,102.71) -- cycle ;
\draw  [fill={rgb, 255:red, 131; green, 135; blue, 255 }  ,fill opacity=1 ] (200.66,92.04) .. controls (200.66,87.93) and (203.99,84.6) .. (208.1,84.6) .. controls (212.22,84.6) and (215.55,87.93) .. (215.55,92.04) .. controls (215.55,96.16) and (212.22,99.49) .. (208.1,99.49) .. controls (203.99,99.49) and (200.66,96.16) .. (200.66,92.04) -- cycle ;
\draw  [fill={rgb, 255:red, 131; green, 135; blue, 255 }  ,fill opacity=1 ] (216,59.82) .. controls (216,56.96) and (218.32,54.64) .. (221.18,54.64) .. controls (224.04,54.64) and (226.36,56.96) .. (226.36,59.82) .. controls (226.36,62.69) and (224.04,65.01) .. (221.18,65.01) .. controls (218.32,65.01) and (216,62.69) .. (216,59.82) -- cycle ;
\draw  [fill={rgb, 255:red, 131; green, 135; blue, 255 }  ,fill opacity=1 ] (185.56,68.62) .. controls (185.56,64.04) and (189.27,60.33) .. (193.84,60.33) .. controls (198.42,60.33) and (202.13,64.04) .. (202.13,68.62) .. controls (202.13,73.2) and (198.42,76.91) .. (193.84,76.91) .. controls (189.27,76.91) and (185.56,73.2) .. (185.56,68.62) -- cycle ;
\draw  [fill={rgb, 255:red, 131; green, 135; blue, 255 }  ,fill opacity=1 ] (263.52,72) .. controls (263.52,68.87) and (266.06,66.33) .. (269.19,66.33) .. controls (272.32,66.33) and (274.86,68.87) .. (274.86,72) .. controls (274.86,75.13) and (272.32,77.67) .. (269.19,77.67) .. controls (266.06,77.67) and (263.52,75.13) .. (263.52,72) -- cycle ;
\draw    (229.13,92) -- (228.62,98.6) ;
\draw    (239.64,85.45) -- (246.73,88.36) ;
\draw    (236.73,70.73) -- (244.91,65.45) ;
\draw    (254.91,63.64) -- (264.36,68.73) ;
\draw    (257.27,87.45) -- (265.64,76.91) ;
\draw    (223.27,64.91) -- (225.09,68.55) ;
\draw    (201.82,71.27) -- (217.82,75.82) ;
\draw    (196.91,76.36) -- (203.09,86.18) ;
\draw  [color={rgb, 255:red, 226; green, 227; blue, 255 }  ,draw opacity=1 ][fill={rgb, 255:red, 226; green, 227; blue, 255 }  ,fill opacity=1 ] (224.7,82.14) .. controls (224.7,79.82) and (226.68,77.94) .. (229.14,77.94) .. controls (231.59,77.94) and (233.58,79.82) .. (233.58,82.14) .. controls (233.58,84.46) and (231.59,86.34) .. (229.14,86.34) .. controls (226.68,86.34) and (224.7,84.46) .. (224.7,82.14) -- cycle ;
\draw  [color={rgb, 255:red, 226; green, 227; blue, 255 }  ,draw opacity=1 ][fill={rgb, 255:red, 226; green, 227; blue, 255 }  ,fill opacity=1 ] (224.73,81.92) -- (233.51,81.92) -- (233.51,86.96) -- (224.73,86.96) -- cycle ;
\draw  [color={rgb, 255:red, 131; green, 135; blue, 255 }  ,draw opacity=1 ][fill={rgb, 255:red, 131; green, 135; blue, 255 }  ,fill opacity=1 ] (225.94,76.22) .. controls (225.94,74.54) and (227.39,73.17) .. (229.17,73.17) .. controls (230.95,73.17) and (232.4,74.54) .. (232.4,76.22) .. controls (232.4,77.91) and (230.95,79.28) .. (229.17,79.28) .. controls (227.39,79.28) and (225.94,77.91) .. (225.94,76.22) -- cycle ;
\draw  [color={rgb, 255:red, 226; green, 227; blue, 255 }  ,draw opacity=1 ][fill={rgb, 255:red, 226; green, 227; blue, 255 }  ,fill opacity=1 ] (225.94,75.5) .. controls (225.94,73.82) and (227.39,72.45) .. (229.17,72.45) .. controls (230.95,72.45) and (232.4,73.82) .. (232.4,75.5) .. controls (232.4,77.19) and (230.95,78.56) .. (229.17,78.56) .. controls (227.39,78.56) and (225.94,77.19) .. (225.94,75.5) -- cycle ;

\draw  [fill={rgb, 255:red, 227; green, 238; blue, 255 }  ,fill opacity=1 ] (179.67,130.4) -- (280.67,130.4) -- (280.67,209.47) -- (179.67,209.47) -- cycle ;
\draw  [fill={rgb, 255:red, 164; green, 164; blue, 164 }  ,fill opacity=1 ] (214.88,178.49) -- (245.5,178.49) -- (245.5,201.9) -- (214.88,201.9) -- cycle ;
\draw  [fill={rgb, 255:red, 74; green, 144; blue, 226 }  ,fill opacity=1 ] (221.23,187.21) .. controls (221.23,185.71) and (222.56,184.48) .. (224.19,184.48) .. controls (225.82,184.48) and (227.15,185.71) .. (227.15,187.21) .. controls (227.15,188.72) and (225.82,189.95) .. (224.19,189.95) .. controls (222.56,189.95) and (221.23,188.72) .. (221.23,187.21) -- cycle ;
\draw  [fill={rgb, 255:red, 74; green, 144; blue, 226 }  ,fill opacity=1 ] (233.23,187.21) .. controls (233.23,185.71) and (234.56,184.48) .. (236.19,184.48) .. controls (237.82,184.48) and (239.15,185.71) .. (239.15,187.21) .. controls (239.15,188.72) and (237.82,189.95) .. (236.19,189.95) .. controls (234.56,189.95) and (233.23,188.72) .. (233.23,187.21) -- cycle ;
\draw  [fill={rgb, 255:red, 74; green, 74; blue, 74 }  ,fill opacity=1 ] (245.64,185.57) -- (248.83,185.57) -- (248.83,194.45) -- (245.64,194.45) -- cycle ;
\draw  [fill={rgb, 255:red, 255; green, 0; blue, 0 }  ,fill opacity=1 ] (227.47,170.42) .. controls (227.47,169.05) and (228.68,167.93) .. (230.17,167.93) .. controls (231.65,167.93) and (232.86,169.05) .. (232.86,170.42) .. controls (232.86,171.79) and (231.65,172.91) .. (230.17,172.91) .. controls (228.68,172.91) and (227.47,171.79) .. (227.47,170.42) -- cycle ;
\draw    (230.17,172.91) -- (230.22,178.72) ;
\draw  [fill={rgb, 255:red, 74; green, 74; blue, 74 }  ,fill opacity=1 ] (211.64,185.74) -- (214.83,185.74) -- (214.83,194.61) -- (211.64,194.61) -- cycle ;

\draw  [fill={rgb, 255:red, 191; green, 191; blue, 191 }  ,fill opacity=1 ] (221.32,194.02) -- (239.06,194.02) -- (239.06,196.75) -- (221.32,196.75) -- cycle ;

\draw    (280.75,70.13) -- (350.5,70.13) -- (350.5,169.88) -- (284.29,169.93) ;
\draw [shift={(282.29,169.93)}, rotate = 359.96] [color={rgb, 255:red, 0; green, 0; blue, 0 }  ][line width=0.75]    (10.93,-3.29) .. controls (6.95,-1.4) and (3.31,-0.3) .. (0,0) .. controls (3.31,0.3) and (6.95,1.4) .. (10.93,3.29)   ;
\draw  [color={rgb, 255:red, 226; green, 227; blue, 255 }  ,draw opacity=1 ][fill={rgb, 255:red, 226; green, 227; blue, 255 }  ,fill opacity=1 ] (250.45,93.21) .. controls (250.45,92.01) and (251.41,91.03) .. (252.6,91.03) .. controls (253.79,91.03) and (254.75,92.01) .. (254.75,93.21) .. controls (254.75,94.41) and (253.79,95.38) .. (252.6,95.38) .. controls (251.41,95.38) and (250.45,94.41) .. (250.45,93.21) -- cycle ;
\draw  [color={rgb, 255:red, 226; green, 227; blue, 255 }  ,draw opacity=1 ][fill={rgb, 255:red, 226; green, 227; blue, 255 }  ,fill opacity=1 ] (250.46,93.1) -- (254.72,93.1) -- (254.72,95.71) -- (250.46,95.71) -- cycle ;
\draw  [color={rgb, 255:red, 131; green, 135; blue, 255 }  ,draw opacity=1 ][fill={rgb, 255:red, 131; green, 135; blue, 255 }  ,fill opacity=1 ] (251.05,90.14) .. controls (251.05,89.27) and (251.75,88.56) .. (252.61,88.56) .. controls (253.48,88.56) and (254.18,89.27) .. (254.18,90.14) .. controls (254.18,91.02) and (253.48,91.73) .. (252.61,91.73) .. controls (251.75,91.73) and (251.05,91.02) .. (251.05,90.14) -- cycle ;
\draw  [color={rgb, 255:red, 226; green, 227; blue, 255 }  ,draw opacity=1 ][fill={rgb, 255:red, 226; green, 227; blue, 255 }  ,fill opacity=1 ] (251.05,89.77) .. controls (251.05,88.9) and (251.75,88.19) .. (252.61,88.19) .. controls (253.48,88.19) and (254.18,88.9) .. (254.18,89.77) .. controls (254.18,90.65) and (253.48,91.35) .. (252.61,91.35) .. controls (251.75,91.35) and (251.05,90.65) .. (251.05,89.77) -- cycle ;

\draw  [color={rgb, 255:red, 226; green, 227; blue, 255 }  ,draw opacity=1 ][fill={rgb, 255:red, 226; green, 227; blue, 255 }  ,fill opacity=1 ] (205.8,93.28) .. controls (205.8,92.08) and (206.76,91.11) .. (207.95,91.11) .. controls (209.14,91.11) and (210.1,92.08) .. (210.1,93.28) .. controls (210.1,94.49) and (209.14,95.46) .. (207.95,95.46) .. controls (206.76,95.46) and (205.8,94.49) .. (205.8,93.28) -- cycle ;
\draw  [color={rgb, 255:red, 226; green, 227; blue, 255 }  ,draw opacity=1 ][fill={rgb, 255:red, 226; green, 227; blue, 255 }  ,fill opacity=1 ] (205.81,93.17) -- (210.07,93.17) -- (210.07,95.78) -- (205.81,95.78) -- cycle ;
\draw  [color={rgb, 255:red, 131; green, 135; blue, 255 }  ,draw opacity=1 ][fill={rgb, 255:red, 131; green, 135; blue, 255 }  ,fill opacity=1 ] (206.4,90.22) .. controls (206.4,89.34) and (207.1,88.64) .. (207.96,88.64) .. controls (208.83,88.64) and (209.53,89.34) .. (209.53,90.22) .. controls (209.53,91.09) and (208.83,91.8) .. (207.96,91.8) .. controls (207.1,91.8) and (206.4,91.09) .. (206.4,90.22) -- cycle ;
\draw  [color={rgb, 255:red, 226; green, 227; blue, 255 }  ,draw opacity=1 ][fill={rgb, 255:red, 226; green, 227; blue, 255 }  ,fill opacity=1 ] (206.4,89.85) .. controls (206.4,88.97) and (207.1,88.26) .. (207.96,88.26) .. controls (208.83,88.26) and (209.53,88.97) .. (209.53,89.85) .. controls (209.53,90.72) and (208.83,91.43) .. (207.96,91.43) .. controls (207.1,91.43) and (206.4,90.72) .. (206.4,89.85) -- cycle ;

\draw  [color={rgb, 255:red, 226; green, 227; blue, 255 }  ,draw opacity=1 ][fill={rgb, 255:red, 226; green, 227; blue, 255 }  ,fill opacity=1 ] (191.07,70) .. controls (191.07,68.58) and (192.31,67.43) .. (193.85,67.43) .. controls (195.38,67.43) and (196.63,68.58) .. (196.63,70) .. controls (196.63,71.42) and (195.38,72.58) .. (193.85,72.58) .. controls (192.31,72.58) and (191.07,71.42) .. (191.07,70) -- cycle ;
\draw  [color={rgb, 255:red, 226; green, 227; blue, 255 }  ,draw opacity=1 ][fill={rgb, 255:red, 226; green, 227; blue, 255 }  ,fill opacity=1 ] (191.09,69.87) -- (196.58,69.87) -- (196.58,72.96) -- (191.09,72.96) -- cycle ;
\draw  [color={rgb, 255:red, 131; green, 135; blue, 255 }  ,draw opacity=1 ][fill={rgb, 255:red, 131; green, 135; blue, 255 }  ,fill opacity=1 ] (191.85,66.38) .. controls (191.85,65.34) and (192.75,64.5) .. (193.87,64.5) .. controls (194.99,64.5) and (195.89,65.34) .. (195.89,66.38) .. controls (195.89,67.41) and (194.99,68.25) .. (193.87,68.25) .. controls (192.75,68.25) and (191.85,67.41) .. (191.85,66.38) -- cycle ;
\draw  [color={rgb, 255:red, 226; green, 227; blue, 255 }  ,draw opacity=1 ][fill={rgb, 255:red, 226; green, 227; blue, 255 }  ,fill opacity=1 ] (191.85,65.94) .. controls (191.85,64.9) and (192.75,64.06) .. (193.87,64.06) .. controls (194.99,64.06) and (195.89,64.9) .. (195.89,65.94) .. controls (195.89,66.97) and (194.99,67.81) .. (193.87,67.81) .. controls (192.75,67.81) and (191.85,66.97) .. (191.85,65.94) -- cycle ;

\draw  [color={rgb, 255:red, 226; green, 227; blue, 255 }  ,draw opacity=1 ][fill={rgb, 255:red, 226; green, 227; blue, 255 }  ,fill opacity=1 ] (267.65,72.53) .. controls (267.65,71.69) and (268.41,71.01) .. (269.36,71.01) .. controls (270.31,71.01) and (271.08,71.69) .. (271.08,72.53) .. controls (271.08,73.37) and (270.31,74.06) .. (269.36,74.06) .. controls (268.41,74.06) and (267.65,73.37) .. (267.65,72.53) -- cycle ;
\draw  [color={rgb, 255:red, 226; green, 227; blue, 255 }  ,draw opacity=1 ][fill={rgb, 255:red, 226; green, 227; blue, 255 }  ,fill opacity=1 ] (267.66,72.45) -- (271.05,72.45) -- (271.05,74.28) -- (267.66,74.28) -- cycle ;
\draw  [color={rgb, 255:red, 131; green, 135; blue, 255 }  ,draw opacity=1 ][fill={rgb, 255:red, 131; green, 135; blue, 255 }  ,fill opacity=1 ] (268.13,70.38) .. controls (268.13,69.77) and (268.69,69.27) .. (269.37,69.27) .. controls (270.06,69.27) and (270.62,69.77) .. (270.62,70.38) .. controls (270.62,71) and (270.06,71.49) .. (269.37,71.49) .. controls (268.69,71.49) and (268.13,71) .. (268.13,70.38) -- cycle ;
\draw  [color={rgb, 255:red, 226; green, 227; blue, 255 }  ,draw opacity=1 ][fill={rgb, 255:red, 226; green, 227; blue, 255 }  ,fill opacity=1 ] (268.13,70.12) .. controls (268.13,69.51) and (268.69,69.01) .. (269.37,69.01) .. controls (270.06,69.01) and (270.62,69.51) .. (270.62,70.12) .. controls (270.62,70.73) and (270.06,71.23) .. (269.37,71.23) .. controls (268.69,71.23) and (268.13,70.73) .. (268.13,70.12) -- cycle ;

\draw  [color={rgb, 255:red, 226; green, 227; blue, 255 }  ,draw opacity=1 ][fill={rgb, 255:red, 226; green, 227; blue, 255 }  ,fill opacity=1 ] (247.76,63.31) .. controls (247.76,62.47) and (248.52,61.78) .. (249.47,61.78) .. controls (250.42,61.78) and (251.19,62.47) .. (251.19,63.31) .. controls (251.19,64.15) and (250.42,64.83) .. (249.47,64.83) .. controls (248.52,64.83) and (247.76,64.15) .. (247.76,63.31) -- cycle ;
\draw  [color={rgb, 255:red, 226; green, 227; blue, 255 }  ,draw opacity=1 ][fill={rgb, 255:red, 226; green, 227; blue, 255 }  ,fill opacity=1 ] (247.77,63.23) -- (251.16,63.23) -- (251.16,65.06) -- (247.77,65.06) -- cycle ;
\draw  [color={rgb, 255:red, 131; green, 135; blue, 255 }  ,draw opacity=1 ][fill={rgb, 255:red, 131; green, 135; blue, 255 }  ,fill opacity=1 ] (248.24,61.16) .. controls (248.24,60.55) and (248.8,60.05) .. (249.49,60.05) .. controls (250.17,60.05) and (250.73,60.55) .. (250.73,61.16) .. controls (250.73,61.77) and (250.17,62.27) .. (249.49,62.27) .. controls (248.8,62.27) and (248.24,61.77) .. (248.24,61.16) -- cycle ;
\draw  [color={rgb, 255:red, 226; green, 227; blue, 255 }  ,draw opacity=1 ][fill={rgb, 255:red, 226; green, 227; blue, 255 }  ,fill opacity=1 ] (248.24,60.9) .. controls (248.24,60.29) and (248.8,59.79) .. (249.49,59.79) .. controls (250.17,59.79) and (250.73,60.29) .. (250.73,60.9) .. controls (250.73,61.51) and (250.17,62.01) .. (249.49,62.01) .. controls (248.8,62.01) and (248.24,61.51) .. (248.24,60.9) -- cycle ;

\draw  [color={rgb, 255:red, 226; green, 227; blue, 255 }  ,draw opacity=1 ][fill={rgb, 255:red, 226; green, 227; blue, 255 }  ,fill opacity=1 ] (219.65,60.58) .. controls (219.65,59.87) and (220.32,59.29) .. (221.16,59.29) .. controls (221.99,59.29) and (222.67,59.87) .. (222.67,60.58) .. controls (222.67,61.29) and (221.99,61.87) .. (221.16,61.87) .. controls (220.32,61.87) and (219.65,61.29) .. (219.65,60.58) -- cycle ;
\draw  [color={rgb, 255:red, 226; green, 227; blue, 255 }  ,draw opacity=1 ][fill={rgb, 255:red, 226; green, 227; blue, 255 }  ,fill opacity=1 ] (219.66,60.52) -- (222.64,60.52) -- (222.64,62.06) -- (219.66,62.06) -- cycle ;
\draw  [color={rgb, 255:red, 131; green, 135; blue, 255 }  ,draw opacity=1 ][fill={rgb, 255:red, 131; green, 135; blue, 255 }  ,fill opacity=1 ] (220.07,58.77) .. controls (220.07,58.25) and (220.56,57.83) .. (221.17,57.83) .. controls (221.78,57.83) and (222.27,58.25) .. (222.27,58.77) .. controls (222.27,59.29) and (221.78,59.7) .. (221.17,59.7) .. controls (220.56,59.7) and (220.07,59.29) .. (220.07,58.77) -- cycle ;
\draw  [color={rgb, 255:red, 226; green, 227; blue, 255 }  ,draw opacity=1 ][fill={rgb, 255:red, 226; green, 227; blue, 255 }  ,fill opacity=1 ] (220.07,58.55) .. controls (220.07,58.03) and (220.56,57.61) .. (221.17,57.61) .. controls (221.78,57.61) and (222.27,58.03) .. (222.27,58.55) .. controls (222.27,59.07) and (221.78,59.48) .. (221.17,59.48) .. controls (220.56,59.48) and (220.07,59.07) .. (220.07,58.55) -- cycle ;

\draw  [color={rgb, 255:red, 226; green, 227; blue, 255 }  ,draw opacity=1 ][fill={rgb, 255:red, 226; green, 227; blue, 255 }  ,fill opacity=1 ] (227.42,103.25) .. controls (227.42,102.64) and (227.98,102.15) .. (228.66,102.15) .. controls (229.34,102.15) and (229.89,102.64) .. (229.89,103.25) .. controls (229.89,103.85) and (229.34,104.34) .. (228.66,104.34) .. controls (227.98,104.34) and (227.42,103.85) .. (227.42,103.25) -- cycle ;
\draw  [color={rgb, 255:red, 226; green, 227; blue, 255 }  ,draw opacity=1 ][fill={rgb, 255:red, 226; green, 227; blue, 255 }  ,fill opacity=1 ] (227.43,103.19) -- (229.87,103.19) -- (229.87,104.5) -- (227.43,104.5) -- cycle ;
\draw  [color={rgb, 255:red, 131; green, 135; blue, 255 }  ,draw opacity=1 ][fill={rgb, 255:red, 131; green, 135; blue, 255 }  ,fill opacity=1 ] (227.77,101.71) .. controls (227.77,101.27) and (228.17,100.91) .. (228.67,100.91) .. controls (229.16,100.91) and (229.56,101.27) .. (229.56,101.71) .. controls (229.56,102.15) and (229.16,102.5) .. (228.67,102.5) .. controls (228.17,102.5) and (227.77,102.15) .. (227.77,101.71) -- cycle ;
\draw  [color={rgb, 255:red, 226; green, 227; blue, 255 }  ,draw opacity=1 ][fill={rgb, 255:red, 226; green, 227; blue, 255 }  ,fill opacity=1 ] (227.77,101.52) .. controls (227.77,101.08) and (228.17,100.72) .. (228.67,100.72) .. controls (229.16,100.72) and (229.56,101.08) .. (229.56,101.52) .. controls (229.56,101.96) and (229.16,102.31) .. (228.67,102.31) .. controls (228.17,102.31) and (227.77,101.96) .. (227.77,101.52) -- cycle ;

\draw (187.67,37.73) node [anchor=north west][inner sep=0.75pt]   [align=left] {\begin{minipage}[lt]{50.32pt}\setlength\topsep{0pt}
\begin{center}
{\scriptsize Social Network}
\end{center}

\end{minipage}};
\draw (179.67,138.07) node [anchor=north west][inner sep=0.75pt]   [align=left] {\begin{minipage}[lt]{58.65pt}\setlength\topsep{0pt}
\begin{center}
{\scriptsize Recommendation}\\[-3pt]{\scriptsize System}
\end{center}

\end{minipage}};
\draw (281.57,56.48) node [anchor=north west][inner sep=0.75pt]   [align=left] {\begin{minipage}[lt]{58.65pt}\setlength\topsep{0pt}
\begin{center}
\textit{{\scriptsize Users' opinions }}
\end{center}

\end{minipage}};
\draw (82.88,173.29) node [anchor=north west][inner sep=0.75pt]   [align=left] {\begin{minipage}[lt]{62.22pt}\setlength\topsep{0pt}
\begin{center}
\textit{{\scriptsize Recommendations}}
\end{center}

\end{minipage}};

\end{tikzpicture}
    \caption{Recommendation system and social network feedback loop.}
    
    \label{feedbackloop}
\end{figure}

\emph{Contribution:}
We build upon~\cite{JC-JL-GZ-YD-LM:18} and~\cite{WSR-JWP-PF:22} by incorporating and extending the FJ opinion dynamics to capture interactions between multiple users and the influence of the recommendation system itself as a feedback loop, see Fig.~\ref{feedbackloop}.
Our contribution is twofold: first, we extend the work in~\cite{WSR-JWP-PF:22} from one single user to multiple, interacting, users. We then propose both a model-free and an idealized model-based control approach for user engagement maximization, with the latter employing a Model Predictive Control (MPC) strategy that accounts for future opinion evolution, see \cite{rawlings2017model} for a comprehensive review on MPC. Although the model-based approach is not attainable in practice (as it relies on perfect knowledge about the state and the model dynamics), it serves as a useful benchmark for the model-free controller. We demonstrate the feasibility and stability of these control approaches and we show that the model-free approach compares favorably to the model-based one. A numerical example is provided to illustrate a scenario in which the optimal controller may lead to drastic opinion shifts, underscoring the complex interplay between recommendation strategies and social outcomes.


\emph{Outline:}
Section~\ref{problemformulation} mathematically formulates engagement maximization and the closed-loop interaction between social network and recommendation system. Section~\ref{results} formulates the model-free and model-based optimal control solutions to the engagement maximization problem. 
 Section~\ref{numericalsimulation} shows numerical simulations and presents a scenario in which the proposed approaches can drastically affect users' opinions. Section~\ref{conclusions} concludes the work.

\emph{Notation:}\label{mathematicalpreliminaries}
Given $x\in \mathbb{R}^n$,  ($A \in \mathbb{R}^{m \times n}$),  $p\in [1,\infty]$, the $\ell_p$ (induced) norm of $x$, ($A$)  is denoted as $\lVert x \rVert_p$, ($\lVert A \rVert_p$). Given a matrix $H\in \mathbb{R}^{n\times n}$, we let $\lVert \cdot \rVert_H$, indicate the $H$-weighted $\ell_2$-norm. We denote as $|\mathbf{x}|$ the vector whose $i$-th entry, $|\mathbf{x}|_i$, is the $i$-th component of $\mathbf{x}$ in modulus, $|\mathbf{x}_i|$. We let the symbols $\mathbf{1}_n$,$\mathbf{0}_n$ and $\mathbb{I}_n$ denote the  $n$-dimentional all ones vector, zero vector and identity matrix, respectively. 
A matrix $A\in \mathbb{R}^{m \times n}$ is \emph{nonnegative} (\emph{positive}) if all its entries are nonnegative (positive).
A nonnegative matrix is said to be \emph{row-(sub)stochastic} if all of its row-sums are (less or) equal than one. 
A directed weighted  graph is a triple $\mathcal{G}~=~(\mathcal{V},\mathcal{E},W)$ where $\mathcal{V}~=~\{v_1,\dots, v_n\}$ is the set of vertices, $\mathcal{E}\subseteq \mathcal{V} \times \mathcal{V}$ is the set of arcs and $W~\in~\mathbb{R}^{n\times n}$ is the adjacency matrix of $\mathcal{G}$. An edge $(i,j)$ belongs to $\mathcal{E}$ if and only if $w_{ij}\neq 0$, where $w_{ij}$ denotes the $(i,j)$th entry of $W$.
A directed path in $\mathcal{G}$ is an ordered sequence of vertices such that any
ordered pair of consecutive vertices in the sequence is an edge in $\mathcal{G}$. The notation $x_{k|t}$ represents a variable that is predicted at time $t$ for $k$-steps ahead.

\section{Problem Formulation}\label{problemformulation}
We seek to design a recommendation system with the goal of maximizing users' engagement. The recommendation system and the social network of users interact in closed-loop, as shown in Fig~\ref{feedbackloop}. We first describe the dynamics through which opinions are formed. Then, we describe the closed-loop system and the mechanism of interaction between the users and recommendation system.

\subsection{Opinion dynamics: Friedkin-Johnsen model}\label{model}
 Let us consider a social network of $n+1$ users, $n\in \mathbb{N}$, described by a directed weighted graph $\mathcal{G}$, where the edge weights represent the frequency of interactions among users.
We assume that there is one single \emph{polarizing} topic of discussion over the network so that the opinion of a user $i~\in~\{1,...,n+1\}$ about the topic at time $t~\geq~0$ can be represented by a scalar variable $o_i(t)~\in~[0,1]$, where $o_i =1,(0)$ represents an opinion in strong (dis)agreement with that topic. 
The opinions of all users at each time step can then be collected into a vector $o(t)~\in~[0,1]^{n+1}$. 
The users' opinions evolve according to the FJ model~\cite{NEF-ECJ:90}, namely:
\begin{equation}\label{FJ}
    o(t+1) = (\mathbb{I}_{n+1}-\Lambda)Wo(t)+\Lambda o(0), 
\end{equation}
where $W~\in~\mathbb{R}^{(n+1)\times (n+1)}$ is the row-stochastic adjacency matrix associated with the directed weighted graph of users $\mathcal{G}$ and $\Lambda=\textrm{diag}(\lambda_1,...,\lambda_n)\in\mathbb{R}^{(n+1)\times (n+1)},\;\lambda_i\in[0,1], \forall i\in\{1,\dots,n+1\}$ is a diagonal matrix containing the stubbornness coefficients $\lambda_i$ of each user. The stubbornness coefficient represents the attachment of each user to its initial opinion. In particular, $\lambda_i~=~1$ corresponds to a fully prejudiced agent anchored to its own initial opinion, while $\lambda_i~=~0$ corresponds to a fully susceptible agent whose opinion is entirely influenced by its neighbours.

\begin{definition}[Prejudice and P-dependence]
    A user $i$, $i~\in~\{1, \dots, n+1\}$, is \emph{prejudiced} if $\lambda_i~>~0$. A user $j$, $j~\in~\{1, \dots, n+1\}$, is \emph{P-dependent} if it is either prejudiced or influenced by a prejudiced node $i~\neq~j$, i.e.,  a directed path from $i$ to $j$ exists in the graph. 
\end{definition}
\begin{definition}[$\lambda$-Connectivity]
   The directed graph $\mathcal{G}$, upon which the dynamics \eqref{FJ} evolves, is said to be \emph{$\lambda$-connected} if all users in the graph are P-dependent.
\end{definition}

We state the necessary and sufficient condition under which the system \eqref{FJ} admits an asymptotically stable equilibrium point.

\begin{theorem}[Convergence guarantees~\cite{AVP-RT:17}]\label{convergencethm}
    Given the FJ model \eqref{FJ}, the state $o^*$, defined as
    \begin{align*}
    o^* = (\mathbb{I}_{n+1}-(\mathbb{I}_{n+1}-\Lambda)W)^{-1}\Lambda o(0) ,
\end{align*}
    is the only equilibrium point of \eqref{FJ} and is asymptotically stable if and only if the system is $\lambda$-connected.
\end{theorem}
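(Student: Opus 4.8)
The plan is to recognize the FJ recursion as an affine linear time-invariant system and reduce the entire statement to a spectral condition on its transition matrix. Writing $M := (\mathbb{I}_{n+1}-\Lambda)W$ and $b := \Lambda o(0)$, the dynamics \eqref{FJ} read $o(t+1) = Mo(t) + b$ with $b$ constant. Any equilibrium satisfies $(\mathbb{I}_{n+1}-M)o^\ast = b$, and since $o(t)-o^\ast = M^{t}(o(0)-o^\ast)$ along trajectories, global convergence to a unique equilibrium from every initial condition is equivalent to $\rho(M)<1$. In that case $1\notin\mathrm{spec}(M)$, so $\mathbb{I}_{n+1}-M$ is invertible, which simultaneously gives existence and uniqueness of the equilibrium and produces exactly the stated closed form $o^\ast = (\mathbb{I}_{n+1}-(\mathbb{I}_{n+1}-\Lambda)W)^{-1}\Lambda o(0)$. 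The theorem therefore collapses to proving the equivalence $\rho(M)<1 \Leftrightarrow \lambda\text{-connectivity}$.

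First I would record the elementary bound $\rho(M)\le 1$. Since $W$ is row-stochastic and $\mathbb{I}_{n+1}-\Lambda$ is diagonal with entries $1-\lambda_i\in[0,1]$, the matrix $M$ is nonnegative with $i$-th row sum equal to $1-\lambda_i\le 1$; hence $\lVert M\rVert_\infty\le 1$ and $\rho(M)\le\lVert M\rVert_\infty\le 1$. A node is thus deficient (row sum strictly below one) precisely when it is prejudiced, which reduces the problem to deciding when this bound is strict.

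For the implication $\lambda$-connectivity $\Rightarrow \rho(M)<1$, I would argue by contradiction using Perron--Frobenius theory for (possibly reducible) nonnegative matrices: if $\rho(M)=1$ then there is a nonnegative eigenvector $v\ge 0$, $v\neq 0$, with $Mv=v$. Choosing a node $i$ with $v_i=\lVert v\rVert_\infty =: V>0$ and using row-substochasticity gives $V=\sum_j M_{ij}v_j \le (1-\lambda_i)V \le V$, so equality throughout forces $\lambda_i=0$ and $v_j=V$ for every out-neighbour $j$ of $i$. Hence the set $\{i:v_i=V\}$ is nonempty, contains only non-prejudiced nodes, and is closed under following edges of $\mathcal{G}$, so no node in it can reach a prejudiced node, contradicting that every node is P-dependent. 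The converse I would prove in contrapositive form: if $\mathcal{G}$ is not $\lambda$-connected, the non-P-dependent nodes form a set $S$ of non-prejudiced agents whose dependence chains remain inside $S$, so the principal submatrix $M_S$ is itself row-stochastic and sits as a diagonal block of a block-triangular $M$; therefore $1=\rho(M_S)\in\mathrm{spec}(M)$, giving $\rho(M)=1$ and precluding asymptotic stability.

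The main obstacle I anticipate is the graph-theoretic half of the equivalence rather than the linear-systems bookkeeping: one must handle the generally reducible structure of $M$ cleanly, verifying that the extremal set of the nonnegative eigenvector is invariant under the dynamics and prejudice-free, and dually that a non-$\lambda$-connected component yields a genuine row-stochastic diagonal block. Aligning the direction of reachability in the invariance argument with the P-dependence definition is the delicate step; once the closed prejudice-free set is identified, the Perron--Frobenius equality analysis and the block-triangular spectrum argument close both directions.
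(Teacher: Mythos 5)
Your proposal is correct, but note that the paper itself offers no proof to compare against: Theorem 2.3 is imported verbatim from the cited survey~\cite{AVP-RT:17}, so your argument is a self-contained reconstruction of the standard proof from the substochastic-matrix literature. The reduction to $\rho(M)<1$ with $M=(\mathbb{I}_{n+1}-\Lambda)W$ is sound (for an affine LTI recursion with the constant term $\Lambda o(0)$ frozen along trajectories, asymptotic stability of the unique fixed point is exactly $\rho(M)<1$, and invertibility of $\mathbb{I}_{n+1}-M$ then yields the stated closed form), the maximum-entry equality analysis correctly forces $\lambda_i=0$ and $v_j=V$ on all out-neighbours, and the contrapositive via a row-stochastic principal block of a block-triangular $M$ is exactly right. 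One point deserves emphasis beyond your own flag about reachability direction: your invariance argument proves the mathematically correct condition --- every node must \emph{reach} a prejudiced node following arcs $(i,j)$ with $w_{ij}\neq 0$, i.e., in the direction of dependence, since row $i$ of $M$ reads off whom $i$ listens to --- whereas the paper's literal definition of P-dependence (``a directed path from $i$ to $j$'' with $i$ prejudiced) points the opposite way under the paper's stated convention that $(i,j)\in\mathcal{E}$ iff $w_{ij}\neq 0$. A simple counterexample shows the literal reading fails: with $n+1=2$, $w_{11}=1$, $w_{21}>0$, $\lambda_1=0$, $\lambda_2>0$, node $1$ is reachable \emph{from} the prejudiced node $2$ yet $1\in\mathrm{spec}(M)$. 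So the paper's definition is coherent only if its edges are read in the influence direction (the graph of $W^{\top}$); your proof silently adopts the correct semantics, which is what the cited reference intends, and with that reading both directions of your equivalence close without gaps.
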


In the remainder of this paper, we consider only $\lambda$-connected user networks. Note that more generalized convergence results can be derived for networks that are not $\lambda$-connected~\cite{AVP-RT:17}.

\subsection{The recommendation system}
We manipulate the FJ model \eqref{FJ} to account for the recommendation system's influence on users' opinions. 
To this end, we consider the recommendation system to be an artificial user in the users graph $\mathcal{G}$ whose opinion is the expressed position of the recommended item. For the sake of notation simplicity, and without loss of generality, we consider the case where the recommendation system is the state $o_{n+1}$. Since the opinion of the recommendation agent is exogenous to the opinion formation process, we remove the $n+1$-th row of the FJ opinion dynamics model and rearrange the dynamics as follows
\begin{equation*}
    \tilde{o}(t+1) = (\mathbb{I}_{n} - \tilde{\Lambda})(\tilde{W}\tilde{o}(t)+  \tilde{{w}}_{{\rm col }(n+1)}o_{n+1}(t)) + \tilde{\Lambda}\tilde{o}(0),
\end{equation*}
where the symbol $\tilde{\cdot}$ indicates the vector (matrix) deprived of the last entry (row and column), while $\tilde{w}_{{\rm col }(n+1)}$ is the last column of $W$, deprived of the last entry.
By denoting as $x(t) := \tilde{o}(t)$,  $A:= (\mathbb{I}_n-\tilde{\Lambda})\tilde{W}$, $B:= (\mathbb{I} - \tilde{\Lambda})\tilde{w}_{{\rm col }(n+1)}$, $u(t):=o_{n+1}(t)$, we rewrite the above equation as

\begin{equation}\label{dynamics}
    x(t+1) = Ax(t)+Bu(t) + \tilde{\Lambda} x(0).
\end{equation}
The system \eqref{dynamics} is now in the form of a standard linear time-invariant (LTI) system with an additive constant term $\tilde{\Lambda} x(0)$, where the recommendation system is the control input $u$. This means that the recommendation system is able to suggest an opinion in the range $[0,1]$ on a given topic per time instant to all the connected users, which will influence their opinions.
Recommendation systems of this form have many real-world applications, including TV news channels, newspapers, or social media platforms. For example, a TV news channel must select one news item per time instant, expressing a position on a given topic which is shown to all viewers of the channel. Note that even the users that do not directly watch the TV channel, can be nevertheless influenced by it via interaction with other users. For the case of social media, instead, recommender systems algorithms do not have the capacity to capture a distinct model for each and every user. Instead, people are grouped into pools based on preferences and demographics that share the same recommendation items.

We start by showing that any input $u\in [0,1]$ will ensure that the state $x$ remains in the range $[0,1]^n$ as long as $x(0)~\in~[0,1]^n$. Such a property is crucial in our scenario to ensure that system \eqref{dynamics} indeed reflects the users' opinions.
\begin{proposition}[Well-posedness]\label{invariance}
    Consider system \eqref{dynamics}, and let $x(0), ~x(t)\in [0,1]^n$. Then, for any input $u(t)\in[0,1]$, we have that $x(t+1)\in[0,1]^n$.
\end{proposition}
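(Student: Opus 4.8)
The plan is to establish the two one-sided bounds $x(t+1) \geq \mathbf{0}_n$ and $x(t+1) \leq \mathbf{1}_n$ separately, exploiting the nonnegativity and row-stochasticity inherited from $W$. The crucial structural fact I would isolate first is that deleting only the last row of the row-stochastic matrix $W$ preserves the unit row-sums of the surviving $n$ rows; splitting each such row into its first $n$ entries and its last entry then yields
$$\tilde{W}\mathbf{1}_n + \tilde{w}_{{\rm col}(n+1)} = \mathbf{1}_n.$$
I would also note that $\mathbb{I}_n - \tilde{\Lambda} = \mathrm{diag}(1-\lambda_1,\dots,1-\lambda_n)$ is nonnegative because each $\lambda_i \in [0,1]$, and consequently $A$, $B$ and $\tilde{\Lambda}$ are all nonnegative, being products or diagonal entries of nonnegative matrices.

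For the lower bound, I would simply observe that $x(t+1)$ is a sum of products of nonnegative matrices with the nonnegative vectors $x(t)$, $x(0)$ and the nonnegative scalar $u(t)$; hence $x(t+1) \geq \mathbf{0}_n$ entrywise, with no further work required.

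For the upper bound the idea is to replace every state and input quantity by its maximal admissible value $1$. Since all coefficient matrices are nonnegative, the hypotheses $x(t) \leq \mathbf{1}_n$, $x(0) \leq \mathbf{1}_n$ and $u(t) \leq 1$ give the entrywise estimate
$$x(t+1) \leq (\mathbb{I}_n-\tilde{\Lambda})\tilde{W}\mathbf{1}_n + (\mathbb{I}_n-\tilde{\Lambda})\tilde{w}_{{\rm col}(n+1)} + \tilde{\Lambda}\mathbf{1}_n.$$
Factoring $\mathbb{I}_n - \tilde{\Lambda}$ out of the first two terms and substituting the row-sum identity collapses the bound to $(\mathbb{I}_n-\tilde{\Lambda})\mathbf{1}_n + \tilde{\Lambda}\mathbf{1}_n = \mathbf{1}_n$, which is precisely the desired conclusion.

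I do not anticipate a genuine obstacle here, since this is a short invariance argument. The one point requiring care, and the step I would double-check, is the bookkeeping around the deleted row and column: one must confirm that $\tilde{W}$ together with the extracted column $\tilde{w}_{{\rm col}(n+1)}$ still reproduces the full unit row-sums, and that the affine term $\tilde{\Lambda}x(0)$ combines with the $(\mathbb{I}_n-\tilde{\Lambda})$-weighted contributions so as to telescope exactly to $\mathbf{1}_n$ rather than overshooting it. Once that identity is in hand, both bounds follow immediately from the entrywise monotonicity of multiplication by nonnegative matrices.
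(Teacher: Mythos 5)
Your proof is correct and follows essentially the same route as the paper: the paper stacks $(x(t),u(t),x(0))$ into a vector $y$, writes $x(t+1)=Py(t)$ with $P=[A\;B\;\tilde{\Lambda}]$, and uses exactly your row-sum observation (stated there as $\|A_i\|_{\infty}+\|B_i\|_{\infty}+\lambda_i\leq 1$, inherited from row-stochasticity of $W$) together with nonnegativity to conclude $\|Py\|_{\infty}\leq\|P\|_{\infty}\|y\|_{\infty}\leq 1$ and $Py\geq \mathbf{0}_n$. Your entrywise-monotonicity phrasing with the explicit identity $\tilde{W}\mathbf{1}_n+\tilde{w}_{{\rm col}(n+1)}=\mathbf{1}_n$ is just a slightly more explicit rendering of the same argument (indeed showing the row sums equal one exactly), not a different method.
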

\begin{proof}
We rearrange \eqref{dynamics} in the following form:
    \begin{equation*}
        x(t+1) = \underbrace{\begin{bmatrix}
            A & B & \tilde{\Lambda}
        \end{bmatrix}}_{{P}}\underbrace{\begin{bmatrix}
            x(t)\\
            u(t)\\
            x(0)
        \end{bmatrix}}_{{y(t)}}
    \end{equation*}
    By assumption, at time $t\geq 0$ the elements of $x(t)$, $u(t)$, and $x(0)$ are all in the range $[0,1]$, therefore, we have $\|{y}(t)\|_{\infty} \leq 1$.
    Now, based on the definition of $A$ and $B$, and considering the $i\in\{1, \dots, n\}$-th row of both the matrices, we have that $\|A_i\|_{\infty} + \|B_i\|_{\infty}\leq 1-\lambda_i$. Since the matrix $\tilde{\Lambda}$ is a diagonal matrix where $\tilde{\Lambda}_i=\lambda_i$, one can show that
    $ \|A_i\|_{\infty} + \|B_i\|_{\infty} + \|\tilde{\Lambda}_i\|_\infty \leq 1$.
    Finally, by considering the entire (non-negative) matrix $P$ and the vector $y$, it is now possible to see that
$ \|{P}{y}\|_{\infty} \leq \|{P}\|_{\infty}\|{y}\|_{\infty}\leq 1$. Finally, all entities involved being non-negative also guarantees $x(t+1)=Py(t)\geq 0$.
\end{proof}

\subsection{Opinion manipulation}
We focus on recommendation systems that seek solely to maximize user engagement. However, other scenarios can be imagined, for example in the context of political propaganda or bad actors seeking to exploit the recommendation system to steer users to some specific opinion. This raises an important question: To what extent would a recommendation system be able to manipulate user opinions? 
To answer this question, we show the set of reachable steady states  for $t\rightarrow\infty$, and give an explicit expression for the bounds of the set as function of the system matrices and initial conditions. 

\begin{proposition}[Reachability bounds]\label{reachability}
    Assume that the system \eqref{dynamics} is $\lambda$-connected. Then, the set of reachable steady states of the system \eqref{dynamics} with the admissible input set $u(t)~\in~[0,1]$ is the set of states component-wise bounded by the lower bound $l~:=~(\mathbb{I}_{n}-A)^{-1}\tilde{\Lambda} x(0)$ and the upper bound $m~:=~(\mathbb{I}_{n}-A)^{-1}(B+\tilde{\Lambda} x(0))$. 
\end{proposition}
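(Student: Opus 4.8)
The plan is to characterize all steady states sustainable by an admissible input and then exploit the nonnegativity of the system matrices to turn the affine dependence on the input into component-wise bounds. First I would observe that a steady state $\bar{x}$ held by a constant input $\bar{u}\in[0,1]$ must satisfy $\bar{x} = A\bar{x} + B\bar{u} + \tilde{\Lambda} x(0)$. Assuming $(\mathbb{I}_n - A)$ is invertible, this yields the closed form
\begin{equation*}
\bar{x}(\bar{u}) = (\mathbb{I}_n - A)^{-1}\bigl(B\bar{u} + \tilde{\Lambda} x(0)\bigr),
\end{equation*}
which is affine in the scalar $\bar{u}$ and evaluates to $l$ at $\bar{u}=0$ and to $m$ at $\bar{u}=1$.

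Second, I would establish that $(\mathbb{I}_n - A)^{-1}$ exists and is nonnegative. Since the network is $\lambda$-connected, Theorem~\ref{convergencethm} guarantees asymptotic stability of the opinion dynamics, i.e. the spectral radius of $A = (\mathbb{I}_n - \tilde{\Lambda})\tilde{W}$ is strictly less than one. Hence the Neumann series $(\mathbb{I}_n - A)^{-1} = \sum_{k=0}^{\infty} A^k$ converges, and because $A\geq 0$ (the product of the nonnegative diagonal $\mathbb{I}_n - \tilde{\Lambda}$ and the sub-stochastic $\tilde{W}$), each term $A^k$ is nonnegative, so $(\mathbb{I}_n - A)^{-1}\geq 0$. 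The input vector $B = (\mathbb{I}_n - \tilde{\Lambda})\tilde{w}_{{\rm col}(n+1)}$ is nonnegative as well, since $\mathbb{I}_n - \tilde{\Lambda}\geq 0$ and $\tilde{w}_{{\rm col}(n+1)}$ is a column of the nonnegative matrix $W$. In particular $m - l = (\mathbb{I}_n - A)^{-1}B \geq 0$, so $l\leq m$ component-wise.

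Third, combining these facts, the map $\bar{u}\mapsto \bar{x}(\bar{u})$ is affine with nonnegative slope $(\mathbb{I}_n - A)^{-1}B\geq 0$, hence component-wise nondecreasing on $[0,1]$. Consequently the steady states attainable by admissible constant inputs form the segment from $l$ to $m$, and every such $\bar{x}$ satisfies $l\leq \bar{x}\leq m$ with both bounds achieved at $\bar{u}=0$ and $\bar{u}=1$. To cover general time-varying inputs, I would note that if $x(t)\to\bar{x}$ then $Bu(t) = x(t+1) - Ax(t) - \tilde{\Lambda}x(0)$ converges; since $B$ is a fixed nonzero vector, this forces $u(t)\to\bar{u}$ for some $\bar{u}\in[0,1]$, so any reachable steady state again equals $\bar{x}(\bar{u})$ and lies within the stated bounds.

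The main obstacle I anticipate is the spectral step: Theorem~\ref{convergencethm} is stated for the full $(n+1)$-dimensional FJ model, so I must verify that deleting the recommendation row and column preserves the condition, i.e. that $\lambda$-connectivity of $\mathcal{G}$ still yields spectral radius of $A$ below one and thus that $\mathbb{I}_n - A$ is a nonsingular M-matrix with nonnegative inverse. Equivalently, one argues that the leakage created by the prejudiced and P-dependent nodes makes the reduced $\tilde{W}$ sufficiently sub-stochastic. Once invertibility and nonnegativity of $(\mathbb{I}_n - A)^{-1}$ are secured, the monotonicity and endpoint evaluation giving $l$ and $m$ are routine.
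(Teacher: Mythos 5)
Your proof is correct and takes essentially the same route as the paper: both solve the steady-state equation to obtain the affine map $\bar{u}\mapsto(\mathbb{I}_n-A)^{-1}\bigl(B\bar{u}+\tilde{\Lambda}x(0)\bigr)$, invoke nonnegativity of $(\mathbb{I}_n-A)^{-1}B$ for component-wise monotonicity in $u$, and substitute the endpoints $u\in\{0,1\}$ to recover $l$ and $m$. The paper's (very terse) proof simply asserts invertibility and nonnegativity of $(\mathbb{I}_n-A)^{-1}B$, so your Neumann-series justification --- including the correctly flagged check that $\lambda$-connectivity yields $\rho(A)<1$ for the reduced system --- and your argument that time-varying inputs converging to a steady state force $u(t)\to\bar{u}$ merely fill in details the paper leaves implicit.
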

\begin{proof}
    The proof trivially follows by noting that any steady-state of \eqref{dynamics} must satisfy the following equation
    \begin{align*}
        x^* = (\mathbb{I}_{n}-A)^{-1}(Bu+\tilde{\Lambda} x(0)),
    \end{align*}
since every element of the vector $(\mathbb{I}_{n}-A)^{-1}B$ is non-negative,  an increase in $u$ implies an increase in every component of $(\mathbb{I}_{n}-A)^{-1}Bu$. Substituting $u\in\{0,1\}$ one retrieves the expressions for $l$ and $m$.
\end{proof}

These bounds are useful in assessing a given network of users' susceptibility to manipulation. Recall from the definitions of $A$ and $B$ that, as the recommendation system's influence on node $i$ increases, the corresponding value $(B)_i$ also increases and the corresponding row of $A$ (representing influence from other users) must decrease (due to row-stochasticity of $W$). 
These findings can subsequently be integrated into current graph topology optimization algorithms, such as the one presented in~\cite{CM-CM-CET:18:wwwc}. By doing so one can aim to minimize the potential influence and manipulation that a recommendation system can exert over its users.


\subsection{Desired objective}\label{userengagement}

Recommendation systems provide content to maximize users’ engagement with a platform.  This objective will be pursued under the following assumption.
\begin{assumption}[Accessible opinions]\label{as:opinions}
All users' opinions are accessible to the recommendation system, i.e., the state $x$ is fully-measurable.
\end{assumption}
This assumption is motivated by the fact that the users' interaction with the platform via clicks or comments allows for the users' opinions to be estimated.

Inspired by~\cite{WSR-JWP-PF:22}, we model the decision of a user to engage with the provided content as a Bernoulli random variable $p_{\textrm{eng}}(t)~\sim~\mathcal{B}(1-\theta(x(t),u(t)))$. The function $\theta:[0,1]^n \times [0,1]\mapsto [0,1]$  measures the likelihood of users with opinions $x(t)~\in~[0,1]^n$ to engage with a recommended item that expresses a position $u(t)\in [0,1]$. We assume that users are affected by \emph{confirmation bias} and so they prefer items that most closely align with their current opinions, i.e., we seek a  $\theta$ such that $p_{\textrm{eng}}~=~1$ when $ x~=~u{\bf 1}_{n}$, and $p_{\textrm{eng}} = 0$ when $| x- u{\bf 1}_n|~=~{\bf 1}_{n}$. For this reason, we propose the  function
\begin{equation}\label{theta1}
    \theta(x(t),u(t)) = \lVert x(t)-u(t)\mathbf{1}_{n}\lVert_2^2,
\end{equation}
as a metric for engagement maximization over the network. 

The objective of the recommender is to maximize users' engagement over an infinite time horizon
\begin{equation}\label{motivatingproblem}
    \max \sum_{t=0}^{\infty}\mathbb{E}\left[p_{\textrm{eng}}(t)\right] = \min \sum_{t=0}^{\infty} \theta(x(t),u(t)).
\end{equation}
\section{Proposed Approaches}\label{results}
We develop two different approaches for the recommendation system: a model-free approach that is purely based on the knowledge of the opinions $x$, and an idealized model-based approach that considers the ideal case where the opinion dynamics ($A$, $B$, and $\tilde{\Lambda}$) are fully accessible.

\subsection{Model-free approach}\label{sec:modelfree}
The recommendation system solves a static optimization problem and iteratively provides a position that matches the users' instantaneous opinions. At each time instant $t\geq 0$, the control input $u_{\rm MF}(t)$ (where $MF$ stands for Model-Free) is chosen as 
\begin{align}\label{modelfree}
        u_{\rm MF}(t)&=\argmin_{u\in [0,1]}  \theta (x(t),u) 
         =\frac{1}{n}  \sum_{i=1}^{n}x_i(t). 
\end{align}
with $\theta (x(t),u)$ as in \eqref{theta1}. The optimization function in \eqref{modelfree} is known as $\ell_2$-distance seminorm \cite{GDP-KDS-FB-MEV:21m}. It is convex and continuously differentiable and it can be shown that the optimal value is achieved by the average of the opinions. 

The following theorem formalizes the convergence properties of the system in \eqref{dynamics} in closed-loop with the input \eqref{modelfree}. 
\begin{theorem}[MF and FJ analogy]\label{th:mf} The system in \eqref{dynamics} in closed-loop with the MF controller \eqref{modelfree} follows an FJ-type dynamics, is asymptotically stable and converges to
    \begin{align}\label{eq:eqpointMF}
    x^*_{\rm MF} = \bigg(\mathbb{I}_{n}-A-\frac{1}{n}B\mathbf{1}^\top_n\bigg)^{-1} {\tilde{\Lambda}} { x}(0) = S_{\rm MF} x(0).
\end{align}

\end{theorem}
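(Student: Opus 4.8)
The plan is to substitute the model-free law \eqref{modelfree} into the open-loop system \eqref{dynamics}, recognize the resulting autonomous recursion as a new FJ model, and then read off asymptotic stability and the equilibrium directly from Theorem~\ref{convergencethm}.

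First I would close the loop. Writing $u_{\rm MF}(t) = \frac{1}{n}\mathbf{1}_n^\top x(t)$ and inserting it into \eqref{dynamics} yields
\begin{equation*}
    x(t+1) = \Big(A + \tfrac{1}{n}B\mathbf{1}_n^\top\Big)x(t) + \tilde{\Lambda}x(0) =: A_{\rm MF}\,x(t) + \tilde{\Lambda}x(0).
\end{equation*}
Recalling that $A = (\mathbb{I}_n-\tilde{\Lambda})\tilde{W}$ and $B = (\mathbb{I}_n-\tilde{\Lambda})\tilde{w}_{{\rm col}(n+1)}$, I would factor out the shared $(\mathbb{I}_n-\tilde{\Lambda})$ to obtain $A_{\rm MF} = (\mathbb{I}_n-\tilde{\Lambda})W_{\rm MF}$ with $W_{\rm MF} := \tilde{W} + \frac{1}{n}\tilde{w}_{{\rm col}(n+1)}\mathbf{1}_n^\top$. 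This exhibits the closed loop as an instance of \eqref{FJ} with graph matrix $W_{\rm MF}$, stubbornness matrix $\tilde{\Lambda}$, and additive term $\tilde{\Lambda}x(0)$, which establishes the FJ-type claim.

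Next I would check that $W_{\rm MF}$ is a genuine row-stochastic adjacency matrix. Nonnegativity is immediate, since $\tilde{W}$, $\tilde{w}_{{\rm col}(n+1)}$ and $\mathbf{1}_n$ are all nonnegative. For the row sums, row-stochasticity of the full $W$ gives that the $i$-th row of $\tilde{W}$ sums to $1-w_{i,n+1}$, while the rank-one correction contributes $n\cdot\frac{1}{n}w_{i,n+1}=w_{i,n+1}$ to row $i$; the two sum to $1$, so $W_{\rm MF}$ is row-stochastic.

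The crux is to discharge the hypothesis of Theorem~\ref{convergencethm}, that is, to certify that the reduced closed-loop graph associated with $(W_{\rm MF},\tilde{\Lambda})$ is $\lambda$-connected. Here I would use that the rank-one correction has nonnegative entries, so $W_{\rm MF}\ge \tilde{W}$ entrywise and closing the loop can only add directed edges, never remove them; since the stubbornness, and hence the prejudiced set, is unchanged at $\tilde{\Lambda}$, every directed path certifying P-dependence survives loop closure, so $\lambda$-connectivity is preserved. With this in hand, Theorem~\ref{convergencethm} delivers asymptotic stability and the unique equilibrium $(\mathbb{I}_n - A_{\rm MF})^{-1}\tilde{\Lambda}x(0)$, which is exactly \eqref{eq:eqpointMF}. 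I expect this last step to be the main obstacle: one must carefully confirm that removing the recommendation node and re-injecting its influence through the averaging feedback still leaves at least one prejudiced node reachable to every user, so that the reduced network genuinely inherits $\lambda$-connectivity rather than relying only on the full $(n+1)$-node network satisfying it.
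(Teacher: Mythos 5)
Your proof is correct and follows essentially the same route as the paper's: substitute $u_{\rm MF}(t)=\frac{1}{n}\mathbf{1}_n^\top x(t)$ into \eqref{dynamics}, factor the closed loop as $x(t+1)=(\mathbb{I}_n-\tilde{\Lambda})W_{\rm MF}\,x(t)+\tilde{\Lambda}x(0)$ with $W_{\rm MF}=\tilde{W}+\frac{1}{n}\tilde{w}_{{\rm col}(n+1)}\mathbf{1}_n^\top$ (the paper's matrix $F$), verify row-stochasticity via the same row-sum computation, and invoke Theorem~\ref{convergencethm}. Your closing worry is immaterial: the paper's standing assumption (cf.\ Proposition~\ref{reachability}) is that the reduced system \eqref{dynamics} itself, i.e.\ the pair $(\tilde{W},\tilde{\Lambda})$, is $\lambda$-connected, so your edge-monotonicity argument — which the paper merely asserts rather than proves — already discharges the only step in question.
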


\begin{proof}
  By substituting the optimal input $u_{\rm MF}$ into \eqref{dynamics}, we have
\begin{align}
\label{new_fj}
x(t+1) &= A x(t) +\frac{1}{n} B  \sum_{i=1}^{n}x_i(t)+ {\tilde{\Lambda}} x(0) \notag\\
    &= ({ \mathbb{I}_{n} - \tilde{\Lambda}})F{ x}(t) + {\tilde{\Lambda}} { x}(0),
\end{align}
where $ F = \tilde{ W} + \frac{1}{n}\tilde{ w}_{{\rm col} (n+1)}{\bf  1}_{n}^\top$. Note that  $F$ is the adjacency matrix of a $\lambda$-connected graph. Moreover, the $i$-th row of $F$ corresponds to the $i$-th row of the  $\tilde{W}$ with each entry incremented by $\frac{1}{n}(\tilde{w}_{{\rm col}(n+1)})_i$ and hence it is row stochastic.
Therefore, the system is an FJ-type dynamics, and one can make use of Theorem~\ref{convergencethm} to ensure that the it asymptotically converges to the equilibrium point \eqref{eq:eqpointMF}.
\end{proof}

\subsection{Model-based approach}
We propose an idealized model-based (MB) approach, namely, a model predictive controller, that optimally achieves the desired objective by predicting the evolution of the system over a user-defined time horizon $T>0$, and by making use of the matrices $A$, $B$, and $\tilde{\Lambda}$, which are instead assumed to be unknown in the model-free case.






First, we compute the desired optimal steady-state as 
    \begin{align}
    \label{eq:best_steady_state}
        (x_{\rm MB}^*, u_{\rm MB}^*) = \argmin_{x,u} &~\theta(x,u) \\
        \text{subject to } & x = Ax + Bu + \tilde{\Lambda} x(0), \notag\\
        &u \in [0,1]. \notag
        \end{align}
Note that as the objective function is strongly convex and the constraints are linear, an optimal solution to \eqref{eq:best_steady_state} always exists, as stated in the following lemma.
\begin{lemma}[MB Steady state]
The optimization problem \eqref{eq:best_steady_state} admits a closed form solution
\begin{align} \label{eq:explicitMPC}
        x^*_{\rm MB}=\bigg(\mathbb{I}_n-A-\frac{1}{v^\top \mathbf{1}_n}Bv^\top \bigg)^{-1}\tilde{\Lambda}x(0)=S_{\rm MB} x(0),
\end{align}
where $v=(\mathbb{I}_n-A)^{-1}B-\mathbf{1}_n\in \mathbb{R}^{n\times 1}$.
\end{lemma}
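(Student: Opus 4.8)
The plan is to eliminate the state from \eqref{eq:best_steady_state} and reduce the problem to a strongly convex scalar quadratic in $u$. Since we work under $\lambda$-connectivity, Theorem~\ref{convergencethm} guarantees that $(\mathbb{I}_n - A)$ is invertible, so the steady-state constraint can be solved as $x = (\mathbb{I}_n - A)^{-1}(Bu + \tilde{\Lambda}x(0))$. Writing $c := (\mathbb{I}_n - A)^{-1}B$ and $d := (\mathbb{I}_n - A)^{-1}\tilde{\Lambda}x(0)$, the residual entering the objective becomes $x - u\mathbf{1}_n = (c - \mathbf{1}_n)u + d = vu + d$, with $v$ exactly as defined in the statement. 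Hence the objective reduces to $\theta = \lVert vu + d\rVert_2^2$, a one-dimensional quadratic whose unconstrained minimizer is $u^* = -\,v^\top d/(v^\top v)$ whenever $v \neq \mathbf{0}_n$.

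The first substantive step is to verify that this unconstrained minimizer already lies in $[0,1]$, so that the box constraint $u \in [0,1]$ is inactive. Here I would invoke Proposition~\ref{reachability}: its lower and upper bounds are precisely $l = d$ and $m = c + d$, and the invariance of $[0,1]^n$ (Proposition~\ref{invariance}) gives $d \geq \mathbf{0}_n$ and $c + d \leq \mathbf{1}_n$ componentwise. These two facts yield $v = c - \mathbf{1}_n \leq \mathbf{0}_n$ and $d + v = m - \mathbf{1}_n \leq \mathbf{0}_n$. Consequently $-v^\top d \geq 0$ gives $u^* \geq 0$, while $v^\top(d + v) = v^\top d + v^\top v \geq 0$ rearranges to $-v^\top d \leq v^\top v$ and hence $u^* \leq 1$. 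This confirms $u^* \in [0,1]$ and, incidentally, that $v^\top\mathbf{1}_n < 0$ is nonzero, so the claimed closed-form expression is well defined.

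Rather than substitute $u^*$ directly, I would rewrite the first-order optimality condition $v^\top(vu^* + d) = 0$ as $v^\top(x^* - u^*\mathbf{1}_n) = 0$, which is equivalent to the linear feedback law $u^* = v^\top x^*/(v^\top\mathbf{1}_n)$. Inserting this relation into the steady-state equation $x^* = Ax^* + Bu^* + \tilde{\Lambda}x(0)$ collects the terms into $(\mathbb{I}_n - A - \tfrac{1}{v^\top\mathbf{1}_n}Bv^\top)x^* = \tilde{\Lambda}x(0)$, which is the stated expression once the matrix is inverted.

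The remaining obstacle, and the point I would treat most carefully, is the invertibility of $M := \mathbb{I}_n - A - \tfrac{1}{v^\top\mathbf{1}_n}Bv^\top$. I expect the cleanest route mirrors Theorem~\ref{th:mf}: the feedback gain $K := v^\top/(v^\top\mathbf{1}_n)$ is nonnegative (since $v \leq \mathbf{0}_n$ and $v^\top\mathbf{1}_n < 0$) and satisfies $K\mathbf{1}_n = 1$, so $u^* = Kx^*$ is a convex combination of the users' opinions just as the uniform average was in the model-free case. The closed-loop matrix $A + BK$ is therefore again an FJ-type ($\lambda$-connected, row-stochastic) dynamics, and Theorem~\ref{convergencethm} applies to guarantee that $M = \mathbb{I}_n - (A + BK)$ is invertible, giving $x^*_{\rm MB} = M^{-1}\tilde{\Lambda}x(0) = S_{\rm MB}x(0)$ and completing the proof.
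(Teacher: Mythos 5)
Your derivation is correct, and it is substantially more than the paper provides: the paper's entire proof of this lemma is the single remark that \eqref{eq:explicitMPC} ``can be computed by applying the KKT conditions,'' with the derivation omitted for space. Your elimination argument is a clean instantiation of exactly that route: solving the equality constraint for $x$ (licensed by $\lambda$-connectivity, which makes $\mathbb{I}_n-A$ invertible), reducing the objective to the scalar quadratic $\lVert vu+d\rVert_2^2$, and using the reachability bounds of Proposition~\ref{reachability} together with Proposition~\ref{invariance} to show the unconstrained minimizer lies in $[0,1]$ --- which is precisely the verification, skipped by the paper, that the inequality multipliers in the KKT system vanish. You also address two points the paper silently assumes: that $v^\top\mathbf{1}_n\neq 0$, so the closed form is well defined, and that $\mathbb{I}_n-A-\frac{1}{v^\top\mathbf{1}_n}Bv^\top$ is invertible. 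Your observation that the optimality condition is the linear feedback $u^*=Kx^*$ with $K:=v^\top/(v^\top\mathbf{1}_n)$ nonnegative and $K\mathbf{1}_n=1$, so that the closed loop is again an FJ-type dynamics exactly as in Theorem~\ref{th:mf}, is a genuinely nice structural addition: it explains the MB steady state as another ``weighted-average'' recommender, with the uniform weights of \eqref{modelfree} replaced by the weights $K$.

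Two steps deserve tightening. First, your componentwise inequalities only give $v\leq\mathbf{0}_n$; the strict inequality $v^\top\mathbf{1}_n<0$ needs $v\neq\mathbf{0}_n$, which you assert ``incidentally'' but never derive. The one-line fix: row-stochasticity of $W$ gives $(\mathbb{I}_n-A)\mathbf{1}_n=B+\tilde{\Lambda}\mathbf{1}_n$, hence $v=-(\mathbb{I}_n-A)^{-1}\tilde{\Lambda}\mathbf{1}_n$; since $\lambda$-connectivity forces at least one prejudiced user ($\tilde{\Lambda}\neq 0$, as the paper itself notes in the MB convergence proof) and $(\mathbb{I}_n-A)^{-1}=\sum_{k\geq 0}A^k\geq\mathbb{I}_n$ entrywise, one gets $v\leq\mathbf{0}_n$ with at least one strictly negative entry --- which also yields your sign facts directly, without detouring through the reachability bounds. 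Second, the claim that $A+BK$ is again $\lambda$-connected needs one more line than the MF analogue, because $K$ can have zero entries (a user $i$ with $\lambda_i=0$ whose only in-neighbor is the recommender has $v_i=0$), so the added rank-one term $\tilde{w}_{{\rm col}(n+1)}K$ is not strictly positive as $\frac{1}{n}\tilde{w}_{{\rm col}(n+1)}\mathbf{1}_n^\top$ was. The gap closes by noting that $\mathrm{supp}(K)=\{i: v_i<0\}$ is exactly the set of users that are P-dependent through the users-only subgraph $\tilde{W}$; every recommender-connected user acquires, via $\tilde{w}_{{\rm col}(n+1)}K$, an edge into that set, and $\lambda$-connectivity of \eqref{dynamics} then propagates P-dependence to all remaining users. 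With these two patches your argument is complete --- and, unlike the paper's, actually on the page.
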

\begin{proof}
    The expression \eqref{eq:explicitMPC} can be computed by applying the Karush-Kuhn-Tucker (KKT) conditions to \eqref{eq:best_steady_state}. The derivation is omitted due to space constraints.
\end{proof}

At time $t$, given the current state $x(t)$, and the optimal steady-state $x_{\rm MB}^*$, the following optimal control problem (OCP) is solved 
 \begin{subequations}\label{empc}
    \begin{align}
        V^*_t := \min_{{x}_{\cdot|t},{u}_{\cdot|t}}&\sum_{k=0}^{T-1}\theta({x}_{k|t},u_{k|t})\label{empc_cost}\\
        \text{subject to } & {x}_{k+1|t} = {Ax}_{k|t}+{B}u_{k|t}+{\tilde{\Lambda} x}(0),\notag \\
        &{x}_{0|t} = {x}(t), \quad {x}_{T|t} = {x}_{\rm MB}^*\notag\\
        &u_{k|t}\in [0,1], \ \forall k \in [0,T-1] \notag.
    \end{align}
\end{subequations}
The optimal state and input sequences resulting from \eqref{empc} are defined as $x^*_{\cdot|t}$ and $u^*_{\cdot|t}$, while the optimal cost function $V^*$ is obtained in correspondence of the input $u_{\rm MB}(t)$, defined as $u_{\rm MB}(t) = u^*_{0|t}$. In closed-loop, the system evolves as follows 
\begin{align*}
x(t+1) =  Ax(t)+B u_{\rm MB}(t)+\tilde{\Lambda} x(0).
\end{align*}
By optimizing in a receding horizon fashion, MPC enables the recommendation system to respond to evolving user preferences and system dynamics, leading to improved transient performance and more effective recommendations.

\begin{theorem}[MB Convergence]
    Let the initial state $x(0)\in [0,1]^{n}$ and the OCP \eqref{empc} be feasible at time $t=0$. Then, the OCP \eqref{empc} is recursively feasible for all $t\geq 0$, the state $x(t)\in [0,1]^{n}$ for all $t\geq0$, and the steady-state $x_{\rm MB}^*$ is asymptotically stabilized.
\end{theorem}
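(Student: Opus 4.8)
The plan is to establish the three claims in the order in which they depend on one another: recursive feasibility first, then state-constraint satisfaction, and finally asymptotic stability of $x_{\rm MB}^*$, using the optimal cost $V^*_t$ (suitably rotated) as a Lyapunov function. Throughout I would exploit that the additive term $\tilde{\Lambda}x(0)$ in \eqref{dynamics} and in the OCP \eqref{empc} is a fixed constant, and that by construction the minimizer $(x_{\rm MB}^*,u_{\rm MB}^*)$ of \eqref{eq:best_steady_state} satisfies the steady-state identity $x_{\rm MB}^* = A x_{\rm MB}^* + B u_{\rm MB}^* + \tilde{\Lambda}x(0)$ with $u_{\rm MB}^* \in [0,1]$.

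For recursive feasibility I would use the standard shift-and-append argument. Suppose \eqref{empc} is feasible at time $t$ with optimal pair $(x^*_{\cdot|t}, u^*_{\cdot|t})$; applying $u_{\rm MB}(t) = u^*_{0|t}$ yields $x(t+1) = x^*_{1|t}$. I would then build a candidate for time $t+1$ by shifting, $u^{c}_{k|t+1} = u^*_{k+1|t}$ for $k = 0,\dots,T-2$, and appending the steady-state input $u^{c}_{T-1|t+1} = u_{\rm MB}^*$. The shifted portion reproduces $x^*_{k+1|t}$, so that $x^{c}_{T-1|t+1} = x^*_{T|t} = x_{\rm MB}^*$, and the appended steady-state input keeps the terminal state at $x_{\rm MB}^*$ by the steady-state identity; all inputs lie in $[0,1]$, so the candidate is feasible. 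State-constraint satisfaction then follows by induction: $x(0)\in[0,1]^n$, and since every applied input $u_{\rm MB}(t)$ is OCP-feasible and hence in $[0,1]$, Proposition~\ref{invariance} gives $x(t+1)\in[0,1]^n$.

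The main obstacle is asymptotic stability, because this is an economic (rather than a tracking) MPC problem: the stage cost $\theta$ does not vanish at $(x_{\rm MB}^*,u_{\rm MB}^*)$, but only on the consensus diagonal $x = u\mathbf{1}_n$, off of which the prejudice term $\tilde{\Lambda}x(0)$ generically pushes the steady state. Evaluating the candidate cost from the shift argument gives the dissipation-type inequality $V^*_{t+1} - V^*_t \le -\theta(x(t),u_{\rm MB}(t)) + \theta(x_{\rm MB}^*,u_{\rm MB}^*)$, whose right-hand side need not be sign-definite, so $V^*_t$ is not directly a Lyapunov function. I would therefore pass to the rotated cost $L(x,u) := \theta(x,u) - \theta(x_{\rm MB}^*,u_{\rm MB}^*) + p^{\top}\big(x - (Ax+Bu+\tilde{\Lambda}x(0))\big)$, with $p$ the KKT multiplier of the steady-state problem \eqref{eq:best_steady_state}, and establish strict dissipativity, i.e. $L(x,u) \ge \alpha\lVert x - x_{\rm MB}^*\rVert_2^2$ for some $\alpha>0$ on the feasible set, invoking the economic-MPC framework in \cite{rawlings2017model}.

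Because the telescoping storage terms collapse to fixed boundary values under the terminal constraint $x_{T|t} = x_{\rm MB}^*$, the rotated OCP shares the minimizer of \eqref{empc}, so the MPC law is unchanged, while the rotated value function $\hat V_t = V^*_t - T\,\theta(x_{\rm MB}^*,u_{\rm MB}^*) + p^{\top}(x(t)-x_{\rm MB}^*)$ obeys $\hat V_{t+1} - \hat V_t \le -L(x(t),u_{\rm MB}(t)) \le -\alpha\lVert x(t)-x_{\rm MB}^*\rVert_2^2$. Together with quadratic lower and upper bounds on $\hat V_t$ near $x_{\rm MB}^*$ (finite horizon, continuity of $\theta$, and the dissipation bound), this certifies $\hat V_t$ as a Lyapunov function and yields asymptotic stability. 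I expect the delicate step to be verifying strict dissipativity: since $\theta$ is only convex (its Hessian is singular along the consensus direction), I would have to argue that the dynamics together with the $[0,1]$ input constraint prevent drift within that flat direction, so that $L$ is genuinely positive definite in $x - x_{\rm MB}^*$ rather than merely nonnegative.
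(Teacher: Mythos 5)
Your recursive-feasibility and invariance arguments are correct and coincide with the paper's: the shift-and-append candidate $\bar{u}_{k|t+1}=u^*_{k+1|t}$, $\bar{u}_{T-1|t+1}=u^*_{\rm MB}$ is exactly the candidate \eqref{eq:candidate_input} the paper uses, and your appeal to Proposition~\ref{invariance} to get $x(t)\in[0,1]^n$ makes explicit a step the paper's proof leaves implicit. Your diagnosis of the stability part is also accurate: after rewriting $\theta$ via the matrix $H$, the paper's telescoping computation terminates at exactly $V^*_{t+1}-V^*_t\leq-\theta(x(t),u_{\rm MB}(t))+\theta(x^*_{\rm MB},u^*_{\rm MB})$, i.e.\ the dissipation-type inequality you rightly note is not sign-definite (since $(x^*_{\rm MB},u^*_{\rm MB})$ minimizes $\theta$ only over the steady-state manifold, transient stage costs can be smaller, e.g.\ near consensus), and the paper asserts asymptotic convergence directly from it, with only the parenthetical remark that the null space of $H$ is spanned by $\mathbf{1}_{n+1}$ and is unreachable. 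So you have correctly identified this as an economic-MPC problem and spotted a genuine soft spot in the paper's own argument.

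However, the repair you propose fails at precisely the step you flag as delicate, and it fails structurally rather than for lack of verification. Take linear storage $\lambda(x)=p^\top x$ with $p$ the multiplier of the equality constraint in \eqref{eq:best_steady_state} and suppose $u^*_{\rm MB}\in(0,1)$ (the generic case). KKT stationarity gives $(\mathbb{I}_n-A)^\top p=-2(x^*_{\rm MB}-u^*_{\rm MB}\mathbf{1}_n)$ and $B^\top p=-2\mathbf{1}_n^\top(x^*_{\rm MB}-u^*_{\rm MB}\mathbf{1}_n)$; combined with the row-sum identity $(\mathbb{I}_n-A)\mathbf{1}_n=B+\tilde{\Lambda}\mathbf{1}_n$, these force $p^\top\tilde{\Lambda}\mathbf{1}_n=0$. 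Consequently your rotated cost $L$ vanishes identically along the entire consensus line $\bigl(x^*_{\rm MB}+s\mathbf{1}_n,\,u^*_{\rm MB}+s\bigr)$: the quadratic part is annihilated because $H(\mathbf{1}_n,1)^\top=0$, and the linear part has directional slope $p^\top\tilde{\Lambda}\mathbf{1}_n=0$. For small $s\neq0$ these points are feasible with $x\neq x^*_{\rm MB}$, so the claimed bound $L(x,u)\geq\alpha\lVert x-x^*_{\rm MB}\rVert_2^2$ is false: strict dissipativity with linear storage does not hold for this system. What can rescue the statement is that this line is not invariant: if $x=x^*_{\rm MB}+s\mathbf{1}_n$ and $u=u^*_{\rm MB}+s$, then $x^+=x^*_{\rm MB}+s(\mathbf{1}_n-\tilde{\Lambda}\mathbf{1}_n)$, which lies on the line only if $s=0$ (or if all $\lambda_i$ coincide, in which case $s$ contracts by $1-\lambda$). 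You would therefore need either a nonlinear (e.g.\ quadratic) storage function or a LaSalle-type extension of your rotated-cost argument: monotonicity of $\hat{V}_t$ yields $L(x(t),u_{\rm MB}(t))\to0$, the closed loop approaches the consensus line, and the non-invariance above must then be leveraged to conclude convergence to $x^*_{\rm MB}$. Note that the paper does not supply this missing piece either; its remark that $\mathbf{1}_{n+1}$ is not reachable gestures at exactly the invariance argument that both proofs leave open.
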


\begin{proof}
Recursive feasibility can be shown by considering the following candidate input trajectory:
\begin{align}\label{eq:candidate_input}
    \bar{u}_{k|t+1} &= u^*_{k+1|t},\quad 
\bar{u}_{N-1|t+1} = u_{\rm MB}^*.    
\end{align}
with $ k\in[0, \dots, N-2]$.
To show asymptotic stability, we reformulate the stage cost $\theta$ as follows
\begin{align*}
    \theta(x,u)\!\!=\!\!\|(x,u)-(x_{\rm MB}^*,u_{\rm MB}^*)\|_H^2\!\!+\!\!(x,u)^\top\!\!h\!\!-\!\!\|(x_{\rm MB}^*, u_{\rm MB}^*)\|^2_H, 
\end{align*}
where the matrix $H$ and the vector $h$ are defined as follows
\begin{align*}
    H := \begin{bmatrix}
        \mathbb{I}_{n} & -\mathbf{1}_{n}\\
        -\mathbf{1}_{n}^{\top} & n
    \end{bmatrix},\qquad
    h := 2H(x_{\rm MB}^*, u_{\rm MB}^*)^\top.
\end{align*}
Even though the matrix $H$ is positive semi-definite, its null-space corresponds to the vector $\mathbf{1}_{n+1}$, which is the ideal state and input pair but it is not reachable for $\lambda$-connected graphs ($\Lambda\neq 0$).
Based on the candidate input trajectory defined in \eqref{eq:candidate_input}, the candidate state trajectory is
\begin{align*}
        \bar{x}_{k|t+1} &= x^*_{k+1|t}, \, k\in[0, \dots, N-1], \quad \bar{x}_{N|t+1} = x_{\rm MB}^*.   \notag 
\end{align*}
We refer to $\bar{V}_{t+1}$ as the cost associated with the candidate state and input sequences, i.e., $\bar{V}_{t+1} := V(\bar{x}_{\cdot|t+1},\bar{u}_{\cdot|t+1})$.

Note that, by optimality, we have that $V^*_{t+1}\leq  \bar{V}_{t+1}$, therefore, comparing the cost at two consecutive time instants yields
\begin{align*}
    &V^*_{t+1} - V^*_t\leq \bar{V}_{t+1} - V^*_t  \\
    =&\!\!\sum_{k=0}^{T-1}\theta(\bar{x}_{k|t+1},\bar{u}_{k|t+1})-\sum_{k=0}^{T-1}\theta({x}^*_{k|t},u^*_{k|t})\\
    =&\!\!-\!\!\theta(x^*_{0|t}, u^*_{0|t}) + \theta(\bar{x}_{N-1|t+1}, \bar{u}_{N-1|t+1}) \\
    =&\!\!-\!\!\|(x^*_{0|t},u^*_{0|t})-(x_{\rm MB}^*,u_{\rm MB}^*)\|_H^2 - (x^*_{0|t},u^*_{0|t})^\top h \\
    &+ \|(x_{\rm MB}^*,u_{\rm MB}^*)-(x_{\rm MB}^*,u_{\rm MB}^*)\|_H^2 + (x_{\rm MB}^*,u_{\rm MB}^*)^\top h \\
    =&\!\!-\!\!\|(x^*_{0|t},\!u^*_{0|t})\!\!-\!\!(x_{\rm MB}^*,\!u_{\rm MB}^*)\!\|_H^2\!\!-\!\!(x^*_{0|t},\!u^*_{0|t})\!^\top\!\!h\!\!+\!\!2\!\|\!(x_{\rm MB}^*,\!u_{\rm MB}^*)\!\|^2_H \\
    =&\!\!-\!\!\|(x^*_{0|t},\!u^*_{0|t})\|^2_H\!\!-\!\!\|(x_{\rm MB}^*,\!u_{\rm MB}^*)\|_H^2\!\!+\!\!2(x^*_{0|t},\!u^*_{0|t})\!^{\top}\!\!H\cdot\\
    &\cdot(\!x_{\rm MB}^*,\!u_{\rm MB}^*\!) - 2(x^*_{0|t},u^*_{0|t})^{\top}H(x_{\rm MB},u_{\rm MB})\\
    &+\!2\|(x_{\rm MB}^*,u_{\rm MB}^*)\|^2_H\!\!=\!-\|(x^*_{0|t},u^*_{0|t})\|^2_H\!+\!\|(x_{\rm MB}^*,u_{\rm MB}^*)\|_H^2,
\end{align*}
which ensures that the system asymptotically converges to the desired steady state $x_{\rm MB}^*$.
\end{proof}

Note that the known linear dynamics \eqref{dynamics} and quadratic user engagement function \eqref{theta1} would suggest the use of a Linear Quadratic Regulator (LQR) instead of MPC. However, for the system's interpretation to remain valid, the input must be constrained to be within $u\in [0,1]$, which can only be guaranteed by saturating the input of the LQR, resulting in a sub-optimal solution.
\subsection{Comparison between MF and MB approaches}
\begin{lemma}[Steady-states]
    The steady-state opinions of both the MF and MB approach belong to the convex-hull of the initial opinions, namely $x^*_{\rm MF}, x^*_{\rm MB}\in {\rm co}(x(0))$.  
\end{lemma}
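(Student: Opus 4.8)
The plan is to read ${\rm co}(x(0))$ as the interval $[\min_i x_i(0), \max_i x_i(0)]$ and to show, in both cases, that every component of the steady state is a convex combination of the entries of $x(0)$. The algebraic backbone I would establish first is the identity $(\mathbb{I}_n-A)\mathbf{1}_n = \tilde{\Lambda}\mathbf{1}_n + B$. It follows from row-stochasticity of $W$: since row $i\le n$ of $W$ sums to one, $\tilde{W}\mathbf{1}_n = \mathbf{1}_n - \tilde{w}_{{\rm col}(n+1)}$, and substituting $A = (\mathbb{I}_n-\tilde{\Lambda})\tilde{W}$ and $B = (\mathbb{I}_n-\tilde{\Lambda})\tilde{w}_{{\rm col}(n+1)}$ yields the claim. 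Equivalently $(\mathbb{I}_n-A)^{-1}B = \mathbf{1}_n - (\mathbb{I}_n-A)^{-1}\tilde{\Lambda}\mathbf{1}_n$, an identity I would reuse throughout.

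For the MF case I would lean on Theorem~\ref{th:mf}: the closed loop is the FJ-type system $x(t+1) = (\mathbb{I}_n-\tilde{\Lambda})Fx(t)+\tilde{\Lambda}x(0)$ with $F$ row-stochastic, so $S_{\rm MF} = (\mathbb{I}_n-(\mathbb{I}_n-\tilde{\Lambda})F)^{-1}\tilde{\Lambda}$. I would then verify that $S_{\rm MF}$ is row-stochastic: nonnegativity comes from the Neumann series $(\mathbb{I}_n-(\mathbb{I}_n-\tilde{\Lambda})F)^{-1} = \sum_{k\ge0}((\mathbb{I}_n-\tilde{\Lambda})F)^k\ge 0$, which converges because $\lambda$-connectivity forces the spectral radius of the nonnegative matrix $(\mathbb{I}_n-\tilde{\Lambda})F$ to be strictly below one (Theorem~\ref{convergencethm}); the unit row sums come from $F\mathbf{1}_n=\mathbf{1}_n$, which gives $(\mathbb{I}_n-(\mathbb{I}_n-\tilde{\Lambda})F)\mathbf{1}_n = \tilde{\Lambda}\mathbf{1}_n$ and hence $S_{\rm MF}\mathbf{1}_n=\mathbf{1}_n$. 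Therefore $x^*_{\rm MF}=S_{\rm MF}x(0)$ is a convex combination of the entries of $x(0)$ and lies in ${\rm co}(x(0))$.

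For the MB case I would parametrize the steady state reached under a constant input $u$ as $x^*(u) = \tilde{S}x(0) + gu$, with $\tilde{S}:=(\mathbb{I}_n-A)^{-1}\tilde{\Lambda}$ and $g:=(\mathbb{I}_n-A)^{-1}B = \mathbf{1}_n - \tilde{S}\mathbf{1}_n$ by the identity above. Setting $s_i := \sum_j \tilde{S}_{ij}$ and $r_i := 1-s_i$, each component reads $x^*_i(u) = \sum_j \tilde{S}_{ij}x_j(0) + r_i u$, which is a convex combination of $\{x_j(0)\}_j$ and $u$: indeed $\tilde{S}\ge0$ (Neumann series) and $g\ge0$ (shown in the proof of Proposition~\ref{reachability}) give $\tilde{S}_{ij}\ge0$, $r_i\ge0$, and $\sum_j \tilde{S}_{ij}+r_i=1$. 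Hence it suffices to prove $u^*_{\rm MB}\in{\rm co}(x(0))$, after which each $x^*_{{\rm MB},i}$ becomes a convex combination of points in ${\rm co}(x(0))$ and the result follows.

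The crux, and the step I expect to be the main obstacle, is locating $u^*_{\rm MB}$. Substituting $x^*(u)$ into $\theta$ gives the reduced cost $J(u) = \sum_i (x^*_i(u)-u)^2 = \sum_i \big(\sum_j \tilde{S}_{ij}(x_j(0)-u)\big)^2$, a convex quadratic with positive leading coefficient $\sum_i s_i^2$ (positive since $\lambda$-connectivity forces $\tilde{\Lambda}\neq0$, hence $\tilde{S}\neq0$). Solving $J'(u)=0$ gives the minimizer $u^\circ = (\sum_i s_i^2\mu_i)/(\sum_i s_i^2)$, where $\mu_i := (\sum_j \tilde{S}_{ij}x_j(0))/s_i\in{\rm co}(x(0))$ whenever $s_i>0$. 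As a weighted average of the $\mu_i$, $u^\circ\in{\rm co}(x(0))\subseteq[0,1]$, so the box constraint $u\in[0,1]$ is inactive and $u^*_{\rm MB}=u^\circ\in{\rm co}(x(0))$, closing the MB case. I expect the delicate bookkeeping to be the handling of indices with $s_i=0$ (where the corresponding terms drop out of both numerator and denominator) and the formal argument that the constrained minimizer coincides with the unconstrained one.
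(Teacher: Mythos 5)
Your proof is correct, but it takes a genuinely different route from the paper's, whose argument is essentially a citation: the paper shows $x^*_{\rm MB}\in{\rm co}(x(0))$ by asserting that $S_{\rm MB}$ is row-stochastic ``analogously to'' \cite[Theorem 11.2]{CA:23}, and settles the MF case by combining Theorem~\ref{th:mf} with that same reference. Your MF half is in substance the cited argument carried out explicitly (Neumann series for nonnegativity of $(\mathbb{I}_n-(\mathbb{I}_n-\tilde{\Lambda})F)^{-1}$, and $F\mathbf{1}_n=\mathbf{1}_n$ for unit row sums), so it merely makes the paper self-contained. The MB half is where you genuinely diverge: rather than proving row-stochasticity of the composite matrix $S_{\rm MB}$ --- which requires untangling the rank-one-corrected inverse in \eqref{eq:explicitMPC} --- you eliminate $x$ from \eqref{eq:best_steady_state} via $x^*(u)=\tilde{S}x(0)+gu$ and reduce to a scalar convex quadratic in $u$, locating the minimizer as a weighted average of the points $\mu_i\in{\rm co}(x(0))$. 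This buys two things the paper's proof does not provide: it shows $u^*_{\rm MB}\in{\rm co}(x(0))$ itself, not only the steady state; and it proves that the box constraint $u\in[0,1]$ is inactive at the optimum --- a fact the paper's closed form \eqref{eq:explicitMPC}, derived ``by KKT'' with the derivation omitted, silently relies on. Your computation $s^\top s\, u^\circ = s^\top\tilde{S}x(0)$ with $s=\tilde{S}\mathbf{1}_n=\mathbf{1}_n-g$ is consistent with \eqref{eq:explicitMPC}, where $v=g-\mathbf{1}_n=-s$. Two small remarks: your step ``$\lambda$-connectivity forces $\tilde{\Lambda}\neq 0$'' should be read as the paper's standing assumption that the \emph{reduced} system \eqref{dynamics} is $\lambda$-connected with respect to $\tilde{\Lambda}$ (as in Proposition~\ref{reachability}); if the only prejudiced node were the removed recommender, $\tilde{\Lambda}$ could vanish and $G$ would be singular, so the clause matters. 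Under that same assumption, every user is P-dependent on a prejudiced user among $\{1,\dots,n\}$, hence $e_i^\top A^k\tilde{\Lambda}\mathbf{1}_n>0$ for some $k\geq 0$, so in fact every $s_i>0$ and the $s_i=0$ bookkeeping you flagged as delicate never arises.
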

\begin{proof}
To show that $x^*_{\rm MB}\in {\rm co}(x(0))$, one can equivalently show that $S_{\rm MB}$ is row-stochastic. The proof of this fact is analogous to the proof of row stochasticity of $V$ in \cite[Theorem 11.2]{CA:23}.
The fact that $x^*_{\rm MF}\in {\rm co}(x(0))$ follows from Theorem \ref{th:mf} and \cite[Theorem 11.2]{CA:23}.
\end{proof}

We analyze under what conditions the MF and MB approaches achieve the same steady-state.
\begin{theorem}[Steady-state equivalence]\label{ssthm}
Consider the FJ model in \eqref{dynamics} in closed-loop with \eqref{modelfree} and \eqref{empc}. Then, for all $x(0) \in {\rm span}\{ G^{-1}B\}^\perp$, with $G=\mathbb{I}_n-A-\frac{1}{n}B\mathbf{1}_n^\top$, the MB and MF approaches lead opinions to the same steady state. 
\end{theorem}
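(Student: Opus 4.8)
The plan is to reduce the comparison to a single scalar. Both closed-loop equilibria are constrained steady states of \eqref{dynamics}, so---exactly as in Proposition~\ref{reachability}---they lie on the affine line $x(u)=(\mathbb{I}_n-A)^{-1}(Bu+\tilde{\Lambda}x(0))$, $u\in\mathbb{R}$, whose direction $(\mathbb{I}_n-A)^{-1}B$ is, by a Sherman--Morrison expansion of $G=\mathbb{I}_n-A-\frac1n B\mathbf{1}_n^\top$, parallel to $G^{-1}B$. The MF equilibrium corresponds to the value $u^*_{\rm MF}$ that is self-consistent with the feedback law \eqref{modelfree}, while the MB equilibrium corresponds to the value $u^*_{\rm MB}$ that minimizes $\theta(x(u),u)$ along this line. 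Consequently \[ x^*_{\rm MF}-x^*_{\rm MB}=(u^*_{\rm MF}-u^*_{\rm MB})\,(\mathbb{I}_n-A)^{-1}B, \] so the two steady states coincide precisely when the two scalar recommendation values agree.

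Next I would compute the two scalars explicitly. Writing $v=(\mathbb{I}_n-A)^{-1}B-\mathbf{1}_n$ as in \eqref{eq:explicitMPC} and $d:=(\mathbb{I}_n-A)^{-1}\tilde{\Lambda}x(0)$, the engagement residual along the line is $x(u)-u\mathbf{1}_n=vu+d$, so $\theta(x(u),u)=\lVert vu+d\rVert_2^2$. Minimizing over $u$ gives $u^*_{\rm MB}=-v^\top d/\lVert v\rVert_2^2$, while the fixed-point condition $u^*_{\rm MF}=\frac1n\mathbf{1}_n^\top x(u^*_{\rm MF})$ gives $u^*_{\rm MF}=-\mathbf{1}_n^\top d/(\mathbf{1}_n^\top v)$. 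Equating the two and clearing denominators collapses the whole comparison to the single linear constraint \[ d^\top\!\big(\lVert v\rVert_2^2\,\mathbf{1}_n-(\mathbf{1}_n^\top v)\,v\big)=0, \] which, through $d=(\mathbb{I}_n-A)^{-1}\tilde{\Lambda}x(0)$, is a hyperplane condition on $x(0)$.

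An equivalent and more structural route differences the closed forms \eqref{eq:eqpointMF} and \eqref{eq:explicitMPC} directly. With $G_{\rm MB}=\mathbb{I}_n-A-\frac{1}{v^\top\mathbf{1}_n}Bv^\top$, the second resolvent identity yields $S_{\rm MF}-S_{\rm MB}=G^{-1}(G_{\rm MB}-G)G_{\rm MB}^{-1}\tilde{\Lambda}$, and since $G_{\rm MB}-G=B\big(\frac1n\mathbf{1}_n-\frac{1}{v^\top\mathbf{1}_n}v\big)^\top$ is rank one, $S_{\rm MF}-S_{\rm MB}$ is rank one with column space $\mathrm{span}\{G^{-1}B\}$. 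This makes transparent that $x^*_{\rm MF}-x^*_{\rm MB}$ always points along $G^{-1}B$ and vanishes exactly on the orthogonal complement of the (one-dimensional) left annihilator of this rank-one map---which is the equivalence set claimed.

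The hard part is the final identification of this hyperplane with $\mathrm{span}\{G^{-1}B\}^\perp$: I would have to show that the annihilating direction $\tilde{\Lambda}(\mathbb{I}_n-A)^{-\top}\big(\lVert v\rVert_2^2\mathbf{1}_n-(\mathbf{1}_n^\top v)v\big)$ is aligned with $G^{-1}B$. The key lever here is that, by the preceding lemma on steady states, both $S_{\rm MF}$ and $S_{\rm MB}$ are row-stochastic, so $\mathbf{1}_n$ lies in the null space of their difference; this fact strongly constrains the admissible annihilator and is what I would use to pin down its direction and reconcile it with the stated characterization.
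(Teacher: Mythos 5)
Your reduction to a single scalar along the steady-state line is correct, and your two routes are mutually consistent: the explicit scalars $u^*_{\rm MB}=-v^\top d/\lVert v\rVert_2^2$ and $u^*_{\rm MF}=-\mathbf{1}_n^\top d/(\mathbf{1}_n^\top v)$ check out against the closed forms \eqref{eq:eqpointMF} and \eqref{eq:explicitMPC}, and your structural route is essentially the paper's own argument, which writes $S_{\rm MF}=G^{-1}\tilde{\Lambda}$, expresses $S_{\rm MB}$ as a rank-one perturbation $(G+B\alpha D^\top)^{-1}\tilde{\Lambda}$ with $D=\mathbf{1}_n\mathbf{1}_n^\top C-nC$, and applies the Woodbury identity to conclude that $S_{\rm MF}-S_{\rm MB}=K\tilde{\Lambda}$ is rank one with image ${\rm span}\{G^{-1}B\}$. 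The hyperplane you derive, $d^\top\big(\lVert v\rVert_2^2\,\mathbf{1}_n-(\mathbf{1}_n^\top v)\,v\big)=0$, is precisely the kernel condition of this rank-one map (equivalently $D^\top G^{-1}\tilde{\Lambda}x(0)=0$), so up to the point where you stopped, your argument is complete and correct.

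The gap you flag at the end, however, is not closable, and the lever you propose in fact proves the opposite of what you hope. By row-stochasticity of $S_{\rm MF}$ and $S_{\rm MB}$ (equivalently, by the paper's consensus-invariance remark), $\mathbf{1}_n\in\ker(S_{\rm MF}-S_{\rm MB})$, so the kernel normal $w$ must satisfy $w^\top\mathbf{1}_n=0$. But $G^{-1}=\sum_{k\geq 0}\big(A+\frac{1}{n}B\mathbf{1}_n^\top\big)^k$ is nonnegative (the series converges by $\lambda$-connectivity, since $A+\frac{1}{n}B\mathbf{1}_n^\top$ is the Schur-stable MF closed-loop matrix) and $B\geq 0$ is nonzero, hence $\mathbf{1}_n^\top G^{-1}B>0$ and $G^{-1}B\not\perp\mathbf{1}_n$. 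Therefore, whenever $S_{\rm MF}\neq S_{\rm MB}$, the hyperplane ${\rm span}\{G^{-1}B\}^\perp$ does not contain $\mathbf{1}_n$ while the true equivalence set does, so the two hyperplanes cannot coincide, and $w$ is never aligned with $G^{-1}B$. A two-user example (e.g.\ $\tilde{\Lambda}=\tfrac{1}{2}\mathbb{I}_2$ with any row-stochastic $W$) confirms numerically that $u^*_{\rm MF}\neq u^*_{\rm MB}$ for $x(0)\perp G^{-1}B$, while they agree on the consensus line. What your honest stopping point exposes is that the paper's proof makes exactly the leap you balked at: after the Woodbury step it reads the equivalence set off the \emph{image} of the rank-one matrix $K\tilde{\Lambda}$ rather than off its kernel, i.e.\ the orthogonal complement of its row-space direction $\tilde{\Lambda}G^{-\top}D$, and these coincide only in degenerate cases. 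Your scalar condition $d^\top\big(\lVert v\rVert_2^2\,\mathbf{1}_n-(\mathbf{1}_n^\top v)\,v\big)=0$ is the correct characterization of the set on which the MF and MB steady states agree, and the theorem's set ${\rm span}\{G^{-1}B\}^\perp$ should be replaced by that kernel hyperplane.
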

\begin{proof}
We write $S_{\rm MF}$ in \eqref{eq:eqpointMF} as $G^{-1}\tilde{\Lambda}$ and $S_{\rm MB}$ in \eqref{eq:explicitMPC} as $(G+ B\alpha D^\top)^{-1}\tilde{\Lambda}$ where $\alpha=n(C^\top \mathbf{1}_n-n)~\in~\mathbb{R}$ with  $C=(I-A)^{-1}B\in \mathbb{R}^{n}$, and $D=  \mathbf{1}_n\mathbf{1}_n^\top C -n C~\in~\mathbb{R}^{ n}$. By applying the Woodbury matrix identity, one gets that
\begin{align*}
    (G+B\alpha D^\top)^{-1} = G^{-1}-\underbrace{G^{-1}B(\alpha^{-1}+DG^{-1}B)^{-1}DG^{-1}}_{=:K}
\end{align*}
and we want to analyze the null space of $K\tilde{\Lambda}$. We note that the matrix $K\tilde{\Lambda}$ has rank one with image $G^{-1}B$  (note that $\alpha^{-1}+DG^{-1}B$ is a scalar). Therefore for all  $x(0) \in {\rm span}\{ G^{-1}B\}^\perp$, the MF and MB approaches lead to the same steady state solutions.
\end{proof}
\begin{remark}[Consensus invariance]
    The consensus  vector $\mathbf{1}_n\in {\rm ker}(K\tilde{\Lambda})$. In fact, $G^{-1}BDG^{-1}\tilde{\Lambda}\mathbf{1}_n =G^{-1}BDS_{\rm MF}\mathbf{1}_n = G^{-1}BD\mathbf{1}_n= \mathbf{0}_n$ since $S_{\rm MF}$ is row-stochastic and $D\mathbf{1}_n =0$. 
\end{remark}
\section{Numerical Simulations}\label{numericalsimulation}
\subsection{Approaches comparison}
\begin{figure}[]
      \centering
      \includegraphics[scale=.094]{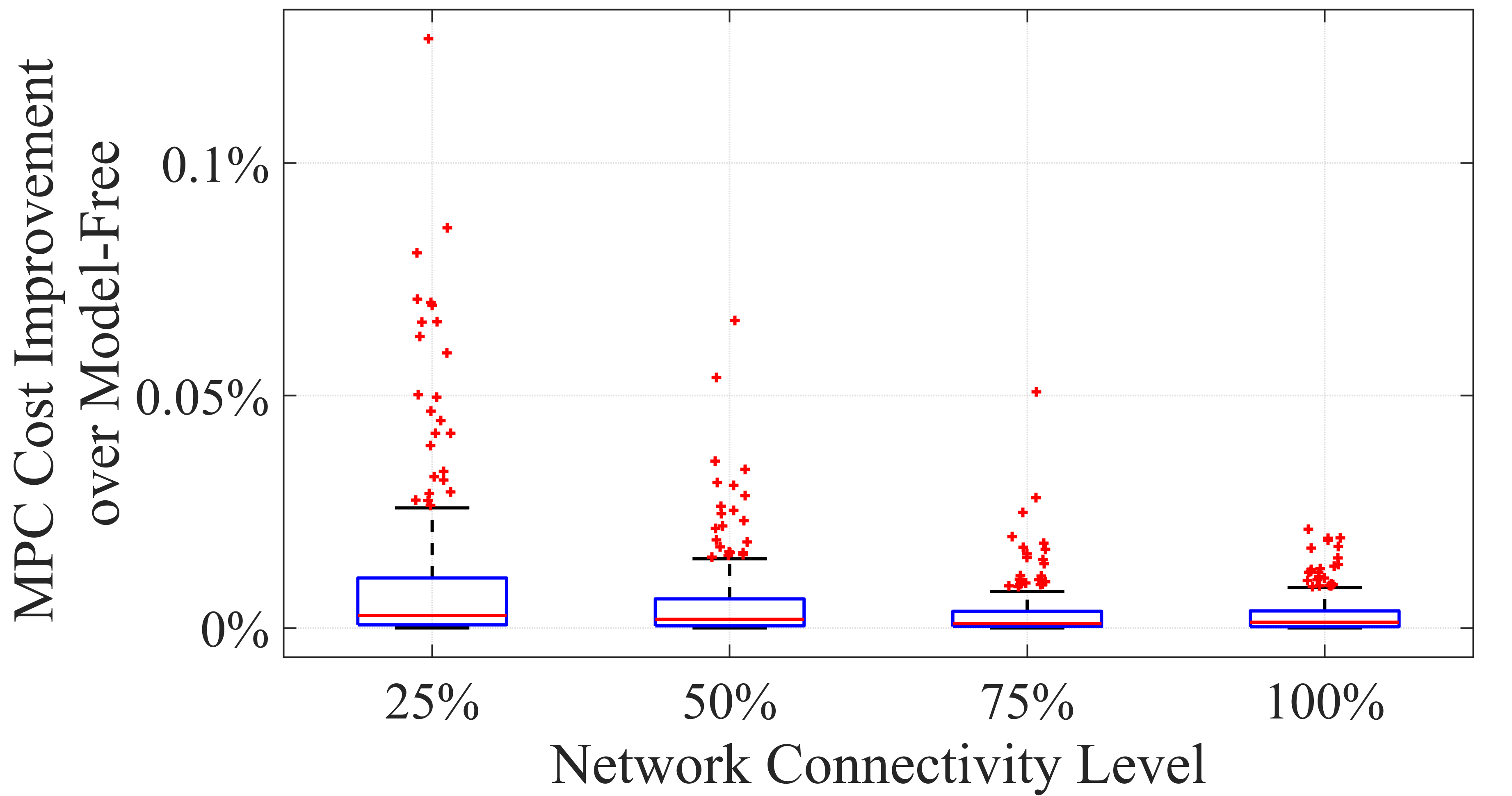}
      \caption{MPC cost improvement over Model-Free.}
      \label{boxplot}
\end{figure}
\begin{figure}[]
      \centering
      \includegraphics[scale=.094]{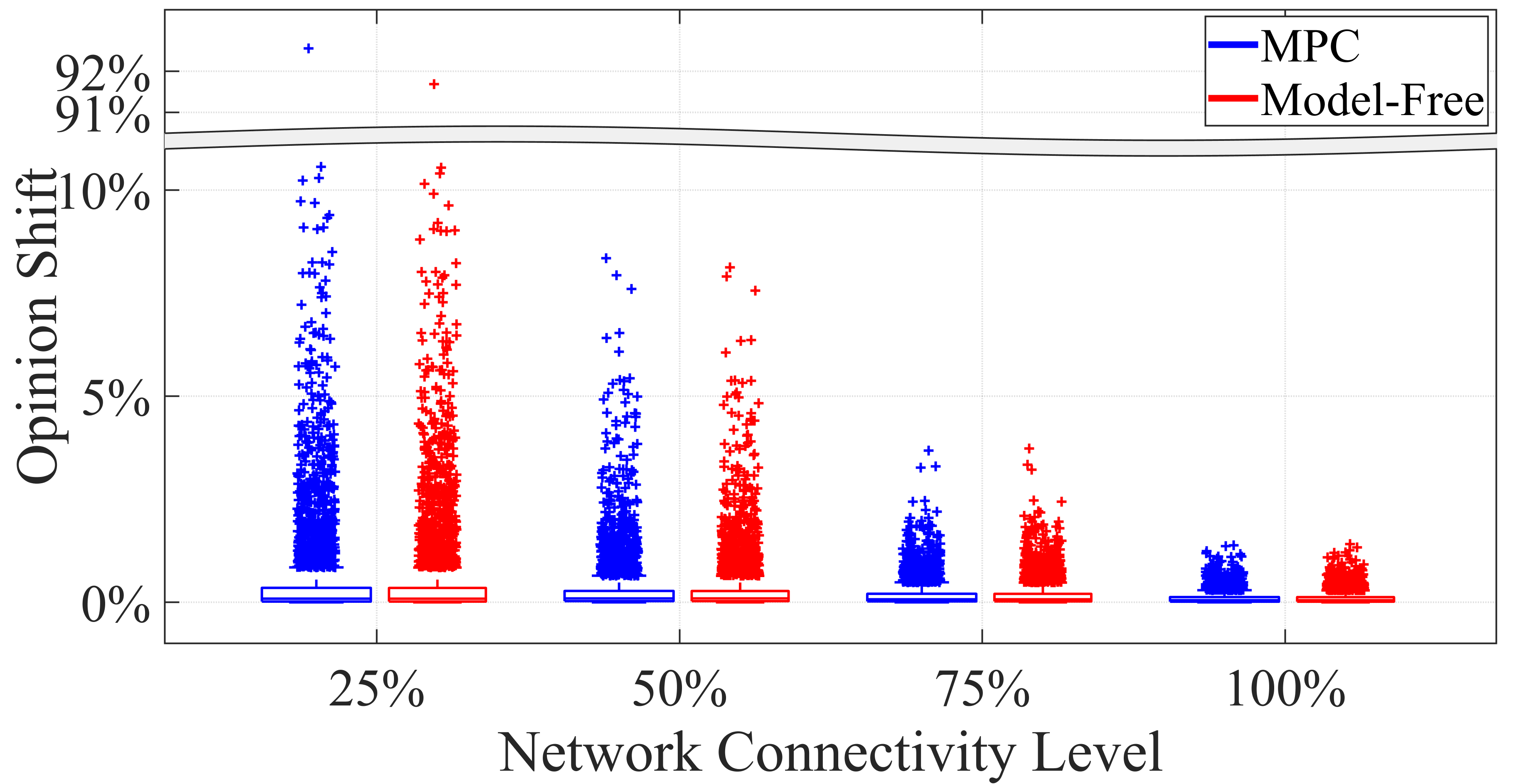}
      \caption{Opinion shift with MPC and Model-Free.}
      \label{opinionshift_boxplot}
\end{figure}
In order to compare the performance of the MB and MF recommendation systems, we performed a series of 1000 simulations for networks of 20 users.
Each simulation was initialized with random dynamics and initial opinions (all networks being $\lambda$-connected). The simulations were divided into four even subsets of network connectivity levels: 25\%, 50\%, 75\%, 100\%, representing the percentage of all possible inter-user connections that are filled. For the MB approach, the planning horizon was fixed to $T=50$. All simulations were performed for 50 time steps, by the end of which the systems reached a steady-state.
The MB outperformed the MF system across all trials, yet the benefits gained from the MB were marginal, as illustrated in Figure~\ref{boxplot}. This slight performance benefit is attributable to the MPC using knowledge of the opinion dynamics to steer users to a more similar opinion subspace, and is more pronounced in networks with lower connectivity. Given that determining the exact system dynamics matrices is virtually impossible for most user networks, it is plausible to anticipate inferior performance from the MB system compared to the MF approach in real-world settings.
Another question of interest is the effect of recommendation systems on users' steady-state opinions. To assess this, we simulated the user networks both with and without a recommendation system to serve as a baseline for comparison. We quantified the percentage opinion shift as $|x_{s} - x^{{\rm free}}_{s}| / x^{{\rm free}}_{s}$ (component-wise), which represents the change in a user's opinion relative to their steady-state opinion in free evolution. Our results in Fig. \ref{opinionshift_boxplot}, indicate that the shifts in opinion for both the MB and the MF strategies were similar across different trials, with the MB causing slightly greater shifts. 
For the sparsest network subset, the average user's opinion shift was a low 0.5\%, yet we observed a maximum shift of over 92.5\% for the MB and 91.6\% for the MF, highlighting the profound effect recommendation systems can have on shaping individual opinions. Notably, this change occurs regardless of any strategic foresight into the benefits of influencing user opinions. It is also important to note that as network connectivity increases, the opinion shift effects are greatly diminished. Given that online networks are often locally dense and globally sparse~\cite{NH-JF-MJM:07}, these findings highlight the importance of considering network structure in understanding and predicting the effects of recommendation systems on opinion dynamics.
\subsection{Fully prejudiced user}\label{sec:radical_user}
We now seek to highlight a scenario where the MPC-based system significantly outperforms the MF system, but also induces more severe shifts in steady-state user opinions.
The network shown in Fig.~\ref{radicalgraph} has one fully prejudiced \emph{radical user} (shown in grey) who has an extreme opinion,  $x_5(0)= 0$, and is completely biased, $\lambda_5 = 1$. Due to their extreme opinion, the radical user is socially isolated, i.e. other users do not consider their opinion. This scenario has a plausible interpretation: if an individual is identified as an extremist, moderate users are less likely to be influenced by him. 
\begin{figure}[]
    \centering
    \begin{tikzpicture}[scale=0.7, transform shape, vertex/.style={draw, circle, fill=white, inner sep=0pt, minimum size=14pt, font = \small}]
        \node[vertex] (4) at (0,0) {4};
        \node[vertex] (2) at (1.75,0) {2};
        \node[vertex] (3) at (3.5,2) {3};
        \node[vertex] (1) at (0,2) {1};
        \node[vertex, fill=gray] (5) at (5,0) {5};
        \node[vertex] (6) at (3.5,0) {6};
        \node[vertex, fill=pink] (7) at (1.75,2.75) {RS};
        
        \draw[->] (2) edge[bend right=10] node[pos=0.7, above right,font=\scriptsize] {0.041} (1);
        \draw[->] (4) edge[bend right=30] node[pos=.3, below right,font=\scriptsize] {0.397} (1);
        \draw[->] (7) edge[bend right=10] node[pos=0.35,above left,font=\scriptsize] {0.562} (1);
        
        \draw[->, out=-50, in =-130, loop, looseness=6, min distance=5mm] (2) to node[left=0.15cm, font=\scriptsize] {0.191} (2);
        \draw[->] (6) edge[] node[above,font=\scriptsize] {0.011} (2);
        \draw[->] (7) edge[] node[below right,font=\scriptsize] {0.798} (2);
        
        \draw[->] (6) edge[bend right=25] node[left,font=\scriptsize] {0.224} (3);
        \draw[->] (7) edge[bend left=10] node[pos=0.35,above right,font=\scriptsize] {0.776} (3);
        
        \draw[->] (1) edge[bend right=30] node[pos=0.5,above left,font=\scriptsize] {1.000} (4);
        
        \draw[->] (4) edge[bend right=55] node[right=0.55cm,font=\scriptsize] {1.000} (6);
        \draw[->] (3) edge[bend left=15] node[above right, font=\scriptsize] {0.472} (5);
        \draw[->] (6) edge[bend right=0] node[above, font=\scriptsize] {0.171} (5);
        \draw[->] (4) edge[bend right=55] node[below, font=\scriptsize] {0.357} (5);
        \node[right=0.2cm, font=\scriptsize, align=left] at (5) {\textit{radical user}};
        \node[above=0.2cm, font=\scriptsize, align=center] at (7) {\textit{recommender}}; 
    \end{tikzpicture}
    \caption{User network with radical user and recommendation system.}
    \label{radicalgraph}
\end{figure}
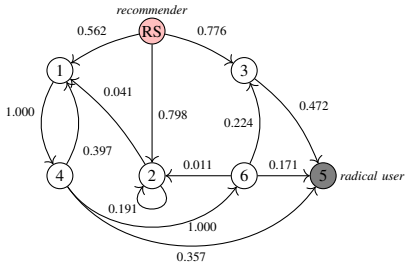

This scenario was simulated for the network in Fig.~\ref{radicalgraph}. The MPC approach used a planning horizon of $T=50$ steps, with both methods applied over a simulation duration of $50$ steps. User biases were set as $\tilde{\Lambda} = \textrm{diag}(0.011, 0.001, 0.092, 0.064, 1.000, 0.055)$ and initial conditions $x(0) = \begin{bmatrix}
    0.67 & 0.74 & 0.83 & 0.68 & 0 & 0.59
\end{bmatrix}^T$.  All users, other than the radical user, have very low biases. In this case, referring to Theorem \ref{ssthm}, the vector $K\tilde{\Lambda}x(0)$ assumes non-negligible values; consequently, we expect a consistent difference between the two approaches.
\begin{figure}[]
      \centering
      \includegraphics[scale=.1]{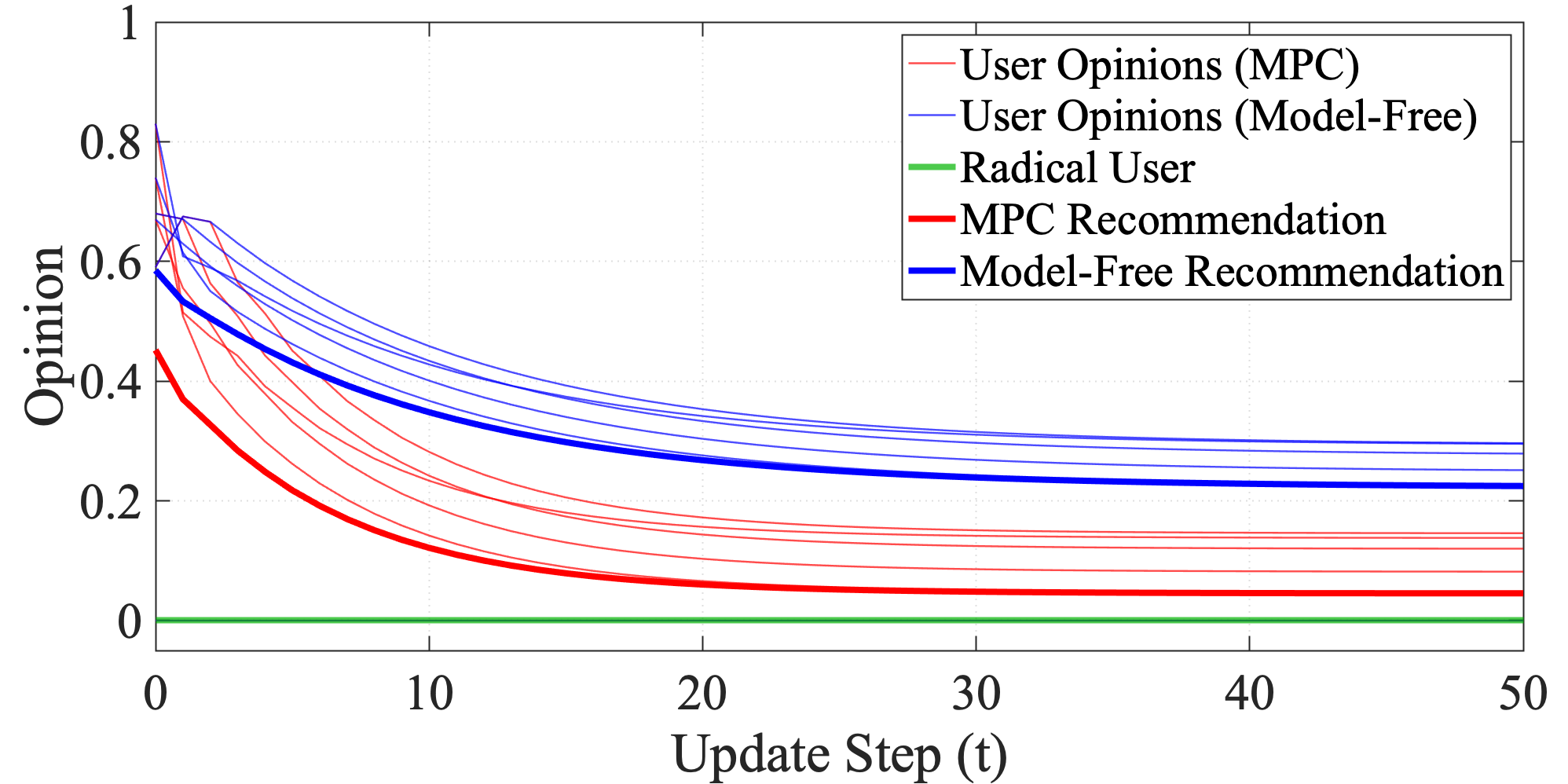}
      \caption{Opinion evolution for the network with a radical user.}
      \label{opinionevolution}
\end{figure}
The evolution of opinions under both recommendation system approaches is shown in Fig.~\ref{opinionevolution}. In particular, the MB system shows a 57\% improvement in steady-state cost over its MF counterpart, while also causing a 24.9\% greater average shift in opinions. These results can be intuitively understood by considering that, for the case of a single recommendation to all users, the recommendation system considers it optimal to provide recommendations aligned with the radical, stubborn user in order to shift other users towards the radical user's opinion. This enables the system to recommend content that is appealing to all users simultaneously in the long run. 
While both systems nudge opinions towards the radical viewpoint, 
the MB recommendation system's superior performance in engaging users stems from its use of the opinion dynamics model to control user opinions towards a more unified opinion subspace. However, this strategic advantage also increases the likelihood of unintentional radicalization of users in scenarios akin to the one assessed. 
The high degree of stubbornness of the radical user that gives rise to this phenomenon would be challenging to infer in real life settings: 
The performance gains of the MB approach are likely unachievable in practice.

\section{Conclusions}\label{conclusions}
In this paper, we modified the classical FJ model to propose a novel framework for recommendation systems as a closed-loop control problem. 
We proposed and compared a model-free and a model-based approach. Our analysis shows that for $\lambda$-connected graphs a model-free control approach performs virtually analogous to a model-based one. However, in some cases, for example if the graph has a stubborn radical user, then the MPC delivers better results in terms of final cost, potentially at the expense of desirable social outcomes. 
Future work may investigate extensions of the proposed approaches for different goals, such as radicalization mitigation, and exploit the notion of observability to infer connected users' opinions. 

\addtolength{\textheight}{-12cm}   




\bibliographystyle{unsrt}

\bibliography{alias,FB,references}

\end{document}